\documentclass[submission,copyright,creativecommons]{eptcs}
\usepackage{underscore}           
\usepackage{amsmath,amsthm,amssymb,latexsym,comment}
\usepackage{enumerate}
\newcommand{\N}{\mathbb{N}}

\newcommand{\Z}{\mathbb{Z}}

\newtheorem{theorem}{Theorem}
\newtheorem{fact}{Fact}
\newtheorem{proposition}{Proposition}
\newtheorem{corollary}{Corollary}
\newtheorem{lemma}{Lemma}

\title{Generalized Results on Monoids as Memory}
\author{\"{O}zlem Salehi
	\institute{Bo\u{g}azi\c{c}i University \\Department of Computer Engineering\\ Bebek 34342 \.{I}stanbul, Turkey\thanks{The first author is partially
			supported by \emph{T\"{U}B\.{I}TAK} (Scientific and Technological Research Council of Turkey).}}
	\email{ozlem.salehi@boun.edu.tr}
	\and
	Flavio D'Alessandro 
	\institute{	Universit\`a di Roma ``La Sapienza''\\ Dipartimento di Matematica\\ Piazzale Aldo Moro 2, 00185 Roma, Italy}
	\institute{	Bo\u gazi\c ci University\\ Department of Mathematics\\ Bebek 34342, \.{I}stanbul, Turkey\thanks{The second author is supported by a EC-FP7 Marie Curie-\emph{T\"{U}B\.{I}TAK} Co-Funded Brain Circulation Scheme Project 2236 Fellowship.}}
	\email{dalessan@mat.uniroma1.it}
	\and
	A. C. Cem Say
	\institute{Bo\u{g}azi\c{c}i University\\ Department of Computer Engineering\\ Bebek 34342 \.{I}stanbul, Turkey}
	\email{say@boun.edu.tr}
}

\begin{document}
	\maketitle
	
	\begin{abstract}
	   We show that some results from the theory of group automata and
		monoid automata still hold for more general classes of monoids and models. Extending previous work for finite automata over commutative groups, we prove that the context-free  language $\mathtt{L_1}^*=  \{a^n b^n : n\geq 1\}^* $ can not be recognized by any rational monoid automaton over a finitely generated permutable monoid. We show that the class of languages recognized by rational monoid automata over finitely generated completely simple or completely 0-simple permutable monoids is a semi-linear full trio. Furthermore, we investigate valence pushdown automata, and prove that they are only as powerful as (finite) valence automata. We observe that certain results proven for monoid automata can be easily lifted to the case of context-free valence
	grammars. 
		
%
	\end{abstract}

	\section{Introduction}
	
	A group automaton is a nondeterministic finite automaton equipped with a register which holds an element of a group. The register is initialized with the identity element of the group, and modified by applying the group operation at each step. An input string is accepted if the register is equal to the identity element at the end of the computation.
	This model has implicitly arisen under various names such as, for instance, nondeterministic blind counter machines \cite{Gr78}, and finite automata with multiplication \cite{ISK76}.  
	
	The notion of group automata has been actively investigated in the last decade, especially in the case of commutative groups and in the case of
	free (non commutative) groups, where remarkable results on the structure of the languages accepted by such automata have been proven  
	\cite{Co05,DM00,MS01}. 
	Subsequently, the notion of group automaton  has been  extended in at least two meaningful ways. The first one is that of monoid automaton, also known as valence automaton, by assuming that the register associated with the model is a monoid \cite{FS02,Ka06,Ka09}. The second one is that 
	of valence pushdown automaton introduced in \cite{FS02}, where the underlying model of computation is a pushdown automaton.

	The notions of monoid and group automata are also strictly related to 
	that of valence grammar introduced by P\v{a}un in \cite{Pa80}. A valence grammar is a formal grammar in which every rule
	of the grammar  is equipped with an element of a monoid called the valence of the rule. Words generated by the grammar are defined by successful derivations. A successful
	derivation is a derivation, that starts from the start symbol of the grammar and such
	that the product of the valences of its productions (taken in the obvious order) is the
	identity of the monoid. 
	
	
	In the case of context-free grammars, a thorough study of several remarkable structural properties of 
	the languages generated by the corresponding valence grammars has been done in \cite{FS02},
	over arbitrary monoids and in particular over commutative groups. 

In the case of monoid automata where the corresponding monoid is not a group, the requirement that a successful computation should end with the identity element limits the extent to which we can make use of the algebraic structure. In order to overcome this obstacle, a new model called monoid automaton with targets has been introduced and studied.
The targets of the automaton are two subsets of the monoid associated with the model, called respectively the initial set and the terminal set,
that define its successful computations. Precisely, at the beginning of the computation,
the register is initialized with an element from the initial set and a computation is successful if the register holds an element of the terminal set at the end of the computation. If the targets of the automaton are rational subsets of the monoid, the corresponding model is called rational monoid automaton.
The idea of allowing more general accepting configurations has been also applied to valence grammars, leading to the concept of valence grammars with target sets \cite{FS01}. 
	
	In this paper, our aim is to show that some of the results from the theory of monoid automata and group automata still hold for more general models and more general classes of monoids. In the first part, we will extend some results proven in \cite{MS01} for finite automata over commutative groups to rational monoid automata defined by finitely generated inverse permutable semigroups, which are a remarkable generalization of some commutative semigroups. We prove a lemma similar to the Interchange Lemma proven in \cite{MS01} which helps us to show that the language 
	$\mathtt{L_1}^*=  \{a^n b^n : n\geq 1\}^* $ can not be recognized by any rational monoid automaton over a finitely generated permutable monoid. It is also shown that the class of languages recognized by rational monoid automata over finitely generated completely simple or completely 0-simple permutable monoids is a semi-linear full trio. In the second part of the paper, we investigate valence pushdown automata and context-free valence grammars. Using the well known equivalence between pushdown automata and finite automata over polycyclic monoids, valence pushdown automata and finite valence automata turn out to be equivalent in terms of language recognition power. We also show that the results proven in \cite{Re10} for monoid automata can be easily verified for context-free valence grammars.

	\section{Background}
	
	 Let $ M $ be a monoid. We denote by $ \circ $ the binary operation of product of $ M $ and by 1 its identity. An \textit{(extended) finite automaton over $ M $} \cite{MS97} (also named an \textit{$ M $-automaton}) is a 6-tuple \[ \mathcal{E} = (Q, \Sigma,M,\delta, q_0,Q_a), \]
	where $Q$ is the set of
	states,  $\Sigma$ is the input alphabet, $q_0 \in Q$ denotes the initial state, $Q_a \subseteq Q$ denotes the
	set of accept states and the transition function $\delta$ is defined as
	\[\delta: Q \times \Sigma_{\varepsilon} \rightarrow \mathbb{P}(Q\times M),\] 
where $ \mathbb{P} $ denotes the power set and $ \Sigma_{\varepsilon}=\Sigma \cup \{\varepsilon\}  $ where $ \varepsilon $ denotes the empty string. For every $ (q \times \sigma) \in Q \times \Sigma_{\varepsilon} $, $ (q',m) \in \delta(q,\sigma)   $ means that when $\mathcal{E}$ reads $\sigma$
	in state $q$, it will move to state $q'$, and write $ x\circ m $ in the register, 
	where $ x $ is the old content of the register. The initial value of the register is the identity element of $ M $. The string is accepted if, after
	completely reading the string,  $\mathcal{E}$ enters an accept state with the 
	content of the register being equal to 1. 
	
	Extended finite automata over monoids are sometimes called \textit{monoid automata} as well. We will denote the family of languages accepted by $ M $-automata by $ \mathfrak{L}_1(M) $. In the cases where the monoid is a group, such models are called group automata.
	
Let $M$ be a monoid. An $M$-automaton is said to be \textit{with targets} if it is equipped with two subsets $ I_0, I_1 \subseteq M $ called the initial set and the terminal set respectively. An input string $ w \in \Sigma^* $ is accepted by the automaton if 
there exists a computation from the initial state to some accepting state such that $ x_0x \in I_1 $, where
$ x_0 \in I_0 $ and $ x  \in M$ is the content of the register of the machine after the reading of $w$. Recall that the \textit{rational subsets} of a monoid are the closure of its finite subsets under the rational operations union, product and Kleene star. In the case that  $ I_0$ and $ I_1 $ are rational subsets of $ M $, the model is called  \textit{rational monoid automaton} defined by $M$ \cite{Re10,RK10}.  The family of languages accepted by rational monoid automata defined by a monoid $M$ will be denoted by  $\mathfrak{L}_{Rat}(M)$. Note that the family of languages accepted by rational monoid automata 
where $I_0=I_1=\{1\}$ coincides with the set of languages recognized by ordinary $ M $-automata.

\textit{Rational semigroup automata} are defined analogously by taking $M$ as a semigroup instead of a monoid.
	
	Let $ G=(N,T,P,S) $ be a  \textit{context-free grammar}  where $ N$ is the nonterminal alphabet, $ T $ is the terminal alphabet, 
	$ P \subseteq N \times (N \cup T)^* $ is the set of rules or productions, and $ S \in N $ is the start symbol.  We will denote by $ \Rightarrow $ and $ \Rightarrow^* $ the step
	derivation relation and its regular closure respectively.
	$ \mathfrak{L}(G)$ denotes the language $\{w\in T^*: S \Rightarrow^* w\} $ of words generated by $G$. 
	
	Let us now recall the notion of valence context-free grammar introduced in  \cite{FS02}. 
	Given a monoid $M$, a \textit{context-free valence grammar over  $ M $} is a five-tuple  $ G=(N,T,P,S,M) $, where $ N,T,S  $ are defined as before and $ P \subseteq N \times (N \cup T)^* \times M $ is a finite set of objects called \textit{valence rules}. Every valence rule can be thus described as an ordered pair
	$ p=(A \rightarrow \alpha, m) $, where $(A \rightarrow \alpha)\in N \times (N \cup T)^*$ and  $ m\in M $. The element $m$ is called the valence of $ p $. 
	
	The step derivation $(\Rightarrow)$ of the valence grammar is defined as follows: if $ (w,m), (w',m')\in T^*\times M$, then  $ (w,m) \Rightarrow (w',m') $ if there exists a valence rule $ (A \rightarrow \alpha,n) $ such that $ w=w_1 Aw_2 $ and $ w'=w_1 \alpha w_2 $ and $ m'=mn$. 
	The regular closure of $ \Rightarrow$ will be denoted by $ \Rightarrow^*$.
	A derivation of $G$ will be said {\em successful} or {\em valid} if it is of the form $(S,1) \Rightarrow^* (w,1)$, that is, it
	transforms the pair $(S,1)$ into the pair $(w,1)$, after finitely many applications of the step derivation relation. 
	The language generated by $ G $ is the set $ \mathfrak{L}(G)$ of all the words $w$ of $T^*$ such that $(S,1) \Rightarrow^* (w,1)$. 
	
	A context-free valence grammar is said to be  \textit{regular} if all of its rules are right-linear, that is, every valence rule $ (A \rightarrow \alpha,n) $ is such that
	$\alpha = uX,$ where $u\in T^*$ and $X\in N$. 
	The language families generated by context-free and
	regular valence grammars over $ M $ are denoted by $ \mathfrak{L}(\textup{Val}, \textup{CF},M)$ and  $ \mathfrak{L}(\textup{Val}, \textup{REG},M)$, respectively.
	
	Let us finally recall the notion of valence pushdown automaton introduced in  \cite{FS02}. Let
	
	 $$  \mathcal{P}=(Q, \Sigma, \Gamma, \delta, q_0, Q_a) $$ be a finite nondeterministic pushdown automaton, where 
	$Q$ is the  set of states, $\Sigma$ is the input alphabet, $ \Gamma $ is the stack alphabet, $q_0$ is the initial state, $Q_a$ is the set of accept states, and 
	$$\delta: Q \times \Sigma_{\varepsilon} \times \Gamma_{\varepsilon} \rightarrow \mathbb{P}(Q \times \Gamma_{\varepsilon})$$
	is the transition function,
	where $\Gamma_{\varepsilon} = \Gamma \cup \{\varepsilon\}$. 
	%
	%
	Given a monoid $M$, a
	\textit{nondeterministic  valence pushdown automaton} \textup{(PDA)} over $ M $ is the model of computation obtained from $\cal P$ as follows:  
	with every transition of $\cal P$  is assigned an element of $M$, called {\em valence} of the transition. Then the valence of an arbitrary computation 
	is defined as the product of the valences of all the transitions of the computation (taken in the obvious order). 
	A word of $\Sigma ^*$ is said to be {\em accepted} by the model if there exists an accepting computation for the word whose valence is the identity of $M$.
	The set of all the accepted words  is defined as the language accepted by the valence pushdown automaton. 
	The family of languages accepted by  valence \textup{PDA} over $ M $ is denoted  $ \mathfrak{L}(\textup{Val},\textup{PDA},M)$.
	It is worth noticing that the equivalence between valence pushdown automata and valence context-free grammars does not hold for an arbitrary monoid. 
	However a remarkable result of \cite{FS02} shows that such equivalence is true 
	if $ M $ is a commutative monoid.
	
	Let us finally recall that in  the case  the pushdown automaton $  \mathcal{P}$ is a finite state automaton, the corresponding valence model has been called 
	\textit{valence automaton over $ M $}, and this model coincides with that of $ M $-automaton. 
	In particular, the family $ \mathfrak{L}(\textup{Val}, \textup{NFA},M)$ of languages accepted by valence automata  coincides with  $\mathfrak{L}_1(M)$, and 
	one can prove that $ \mathfrak{L}(\textup{Val}, \textup{NFA},M) = \mathfrak{L}(\textup{Val}, \textup{REG},M)$ \cite{FS02}. 
	
	Throughout the paper, we will denote by $ \mathsf{REG} $ and $ \mathsf{CF} $  the class of regular and context-free languages respectively. 

\section{Rational monoid automata over permutable monoids}
In this part of the paper, we will generalize some results proved in \cite{MS01}. 
This generalization is based upon two different concepts: the permutation property for semigroups and monoid automata with  rational targets.

\subsection{Permutation property for semigroups}\label{Sec3.1}
Let us talk about the notion of
permutation property. We assume that the reader is familiar with the algebraic theory of semigroups (see \cite{Ho95,La79}). The interested reader can find in \cite{LV99} an excellent survey on this topic.

Let $n$ be a positive integer and let ${\cal S}_n$ be the symmetric group of order $n$.

Let $S$ be a semigroup and let $n$ be an integer with $n\geq2$.  We say that $S$ is {\it n-permutable}, or that $S$ 
satisfies the property {\em ${\cal P}_n$}, if, for every sequence of $n$ elements
$s_1, \ldots, s_n$ of $S$, there exists a permutation $\sigma \in {\cal S}_n$, different from the identity, such that
$$s_1 s_2 \cdots s_n = s_ {\sigma(1)}s_{\sigma(2)}\cdots s_ {\sigma(n)}.$$

A semigroup $S$ is said to be {\em permutable} or that $S$ satisfies the {\em permutation property} if there exists some $n\geq 2$ such that $S$ is $ n $-permutable. 

Obviously the property ${\cal P}_2$ is equivalent to commutativity. If $S$ is a finite semigroup of cardinality $n$, then one immediately verifies that
$S$ is $r$-permutable, with $r\geq n+1$.  The permutation property  was introduced and studied by Restivo and 
Reutenauer in 1984 \cite{RR84} as a finiteness condition for
semigroups. 

Let us recall that an element $s$ of a semigroup $S$ is said to be {\em periodic} if there exist two integers $i, j$, with
$1\leq i < j$,  such that $s^i = s^j$. In particular, if $s=s^2$, $s$ is called {\em idempotent}. Moreover, if $S$ is a group and $1$ is its identity,
then $x$ is periodic if and only if $x^n=1$, for some positive integer $n$. 
If every element of $S$ is periodic, then $S$ is said to be periodic. 

The  following result holds.
\begin{fact}\textup{\cite{RR84}}\label{factRR}
	Let $S$ be a finitely generated and periodic semigroup. Then $S$ is finite if and only if $S$ is permutable. 
\end{fact}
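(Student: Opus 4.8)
The statement is an equivalence with $S$ finitely generated and periodic standing throughout, so I would prove the two implications separately; the real work lies in showing that permutability forces finiteness.

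\emph{The direction ``finite $\Rightarrow$ permutable''.} (Finiteness of course also gives periodicity, since the powers of any element of a finite semigroup must eventually repeat.) Let $|S|=m$ and take an arbitrary sequence $s_1,\dots,s_N$. Colour each pair of indices $i<j$ by the element $s_i s_{i+1}\cdots s_{j-1}\in S$; as there are only $m$ colours, Ramsey's theorem provides, once $N$ is large enough, indices $i<j<k$ receiving a common colour $t$. Comparing $t=s_i\cdots s_{k-1}$ with $(s_i\cdots s_{j-1})(s_j\cdots s_{k-1})=t^2$ shows that $t$ is idempotent, and the two consecutive blocks $s_i\cdots s_{j-1}$ and $s_j\cdots s_{k-1}$ then both evaluate to $t$. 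Interchanging these two blocks is a nontrivial permutation of the positions that leaves the whole product equal to $t^2=t$, hence unchanged, so $S$ is $N$-permutable. (The sharper bound $N\geq m+1$ quoted in the text can be obtained by a cancellation argument in the group case, together with a refinement of this one.)

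\emph{The direction ``permutable $\Rightarrow$ finite''.} Assume now that $S$ is finitely generated, periodic and $n$-permutable. I would first record two closure facts: $n$-permutability passes to every larger length (permute a length-$n$ prefix and leave the rest fixed) and to every subsemigroup, since the defining condition is universal over tuples. The strategy is then to reduce finiteness of $S$ to local finiteness, and local finiteness of the periodic semigroup $S$ to that of its subgroups. The maximal subgroups of $S$ (the $\mathcal{H}$-classes of its idempotents) are periodic, permutable groups, and the group-theoretic counterpart of the result --- a permutable group is finite-by-abelian-by-finite, hence, being periodic, locally finite --- shows that every maximal subgroup is locally finite. One then aims to lift local finiteness from the subgroups to the whole periodic semigroup by a Brown-type reduction, using permutability and periodicity to supply the required control of the idempotent structure; since $S$ is finitely generated, local finiteness yields finiteness.

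\emph{Main obstacle.} The crux is that $n$-permutability only guarantees, for each tuple, the existence of \emph{some} nontrivial rearranging permutation, which may depend on the tuple and need not be a single fixed identity; in particular no uniform commutation of high powers can be expected, as the idempotent left-zero semigroups already show. Converting this non-uniform hypothesis into global structural control is the heart of the matter, and the delicate point is bounding the number of idempotents and governing the aperiodic, non-group part of $S$ through the permutation property --- precisely where the Burnside-type difficulty of passing from ``each piece is finite'' to ``there are only finitely many pieces'' is concentrated, and where finite generation is indispensable (an infinite left-zero semigroup is periodic and permutable but not finitely generated, and not finite).
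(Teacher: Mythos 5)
The paper does not prove this statement: it is quoted as a known theorem of Restivo and Reutenauer \cite{RR84}, so there is no internal proof to compare against, and you were in effect asked to reprove a deep cited result. Judged on its own terms, your proposal fully establishes only the easy implication. The Ramsey-colouring argument for ``finite $\Rightarrow$ permutable'' is correct and complete: colouring the pair $(i,j)$ by the factor $s_i\cdots s_{j-1}$, extracting a monochromatic triangle of colour $t$, observing $t=t^2$, and interchanging the two adjacent blocks that both evaluate to $t$ does yield a nontrivial product-preserving permutation, hence $N$-permutability for $N$ at the Ramsey threshold. (The paper's sharper remark that cardinality $n$ already gives $(n+1)$-permutability is not needed for the equivalence, so the weaker bound is harmless.)

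The converse, which is the entire content of the theorem, is not proved. Your reduction correctly isolates the group case (finitely generated periodic permutable groups are abelian-by-finite by Curzio--Longobardi--Maj, hence finite, and their non-finitely-generated subgroups are locally finite), but the step that lifts this to the semigroup --- the ``Brown-type reduction'' controlling the idempotents and the aperiodic part --- is exactly where the theorem lives, and you state it as an aim rather than carry it out. The intermediate statement you would need, that a finitely generated periodic semigroup all of whose subgroups are locally finite is itself finite, is false without permutability: the semigroup Burnside problem has a negative answer, and there are infinite finitely generated periodic semigroups with only trivial subgroups. So permutability must be injected into the combinatorics of the $\mathcal{J}$-class and idempotent structure itself, not merely into the subgroups, and no argument for that is given. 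You candidly name this as the ``main obstacle,'' but naming it does not close it; as it stands the proposal is a correct strategy outline with the central argument missing.
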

In the case of finitely generated groups,  Curzio, Longobardi and Maj proved a remarkable algebraic characterization of permutable groups.

\begin{fact}\textup{\cite{CLM83}}\label{factCLM}
	Let $G$ be a finitely generated group. Then $G$  is permutable if and only if $G$ is Abelian-by-finite, {i.e.} G has  a (normal) Abelian subgroup of finite index. 
\end{fact}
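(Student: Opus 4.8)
The statement packages together two implications of very unequal difficulty, so the plan is to treat them separately.

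For the direction that an abelian-by-finite group is permutable, I would give a completely elementary pigeonhole-and-swap argument that does not even use normality. Let $A\leq G$ be abelian with $k:=[G:A]<\infty$, and let $s_1,\dots,s_n\in G$ be an arbitrary sequence with $n:=2k$. I would form the $n+1$ partial products $p_t=s_1\cdots s_t$ (with $p_0=1$) and record their cosets $c_t=p_tA$. Since there are only $k$ cosets and $2k+1$ indices, pigeonhole yields $0\le i<j<l\le n$ with $c_i=c_j=c_l$, so that $a:=s_{i+1}\cdots s_j$ and $b:=s_{j+1}\cdots s_l$ both lie in $A$ and hence commute. Swapping the two nonempty blocks therefore leaves the total product unchanged:
\[
s_1\cdots s_i\,(ab)\,s_{l+1}\cdots s_n=s_1\cdots s_i\,(ba)\,s_{l+1}\cdots s_n.
\]
The block swap is a non-identity permutation $\sigma\in\mathcal{S}_n$, so $G$ satisfies $\mathcal{P}_{2k}$. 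To supply the \emph{normal} abelian subgroup promised in the statement, I would finally replace $A$ by its core $\bigcap_{g\in G}gAg^{-1}$, a normal abelian subgroup still of finite index.

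The converse — that a finitely generated permutable group is abelian-by-finite — is the deep half, and this is where essentially all the work lies. The obstacle is structural: $\mathcal{P}_n$ is not a single law but a \emph{disjunction} of the positive word-identities $s_1\cdots s_n=s_{\sigma(1)}\cdots s_{\sigma(n)}$, one of which must hold for each tuple, with $\sigma$ depending on the tuple. The plan is first to convert this tuple-dependent rewriting condition into a genuine finiteness statement about non-commutativity. Concretely, I would feed long, highly structured tuples (for instance words built from two elements $x,y$ and their partial products) into $\mathcal{P}_n$ and extract, by a Ramsey/pigeonhole analysis over the finitely many admissible permutations, a uniform bound on the sizes of conjugacy classes; B.\,H.\ Neumann's theorem on BFC groups then gives that the commutator subgroup $G'$ is finite, so that $G$ is finite-by-abelian.

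Finally I would combine finiteness of $G'$ with the finite-generation hypothesis to descend to abelian-by-finite. The centralizer $C=C_G(G')$ has finite index, since the conjugation action gives an embedding of $G/C$ into the finite group $\operatorname{Aut}(G')$; moreover $C$ is finitely generated, and $C'\leq G'$ is both finite and central in $C$, so $C$ is nilpotent of class at most $2$ with finite central commutator subgroup. A suitable finite-index subgroup of $C$, generated by powers of a free-abelian transversal, is then genuinely abelian, and its core provides the required normal abelian subgroup of finite index in $G$. The single hardest step, and the one I expect to absorb most of the argument, is the middle one: turning the disjunctive permutation property into the finite-$G'$ conclusion, since one must rule out the many ``bad'' permutations that merely reshuffle blocks without forcing any actual commutation.
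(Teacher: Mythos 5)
First, a remark on the comparison itself: the paper does not prove this statement at all --- it is quoted as Fact~\ref{factCLM}, an external result attributed to \cite{CLM83} --- so there is no internal proof to measure your attempt against, and anything you write here is necessarily a reconstruction of the cited theorem. Your easy direction is correct and is the standard argument: with $[G:A]=k$ and $n=2k$, the $2k+1$ partial products meet only $k$ cosets, so three of them lie in a common coset, the two intermediate blocks lie in $A$ and commute, and the block swap is a non-identity permutation in ${\cal S}_n$ witnessing ${\cal P}_{2k}$; passing to the core handles normality.

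The hard direction, however, contains a genuine error, not just a sketchy step. You propose to show that a finitely generated permutable group has uniformly bounded conjugacy classes and then invoke B.~H.~Neumann's BFC theorem to conclude that $G'$ is finite. This intermediate claim is false: permutability does \emph{not} imply BFC, and $G'$ need not be finite. The infinite dihedral group $D_\infty=\langle r,s \mid s^2=1,\ srs=r^{-1}\rangle$ is $2$-generated and abelian-by-finite (the subgroup $\langle r\rangle\cong\mathbb{Z}$ has index $2$), hence permutable by your own easy direction; yet the conjugacy class of $s$ is infinite and $G'=\langle r^2\rangle$ is infinite, so $D_\infty$ is neither BFC nor finite-by-abelian. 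No amount of Ramsey analysis on tuples can therefore extract the bound you want, and the subsequent reduction (finite $G'$ plus finite generation gives abelian-by-finite --- which is itself correct, e.g.\ via the finiteness of the index of $Z(C_G(G'))$) is applied to a premise that fails. The actual theorem of Curzio, Longobardi and Maj requires a different and substantially harder structural analysis (in the general case one only gets finite-by-abelian-by-finite, which for finitely generated groups collapses to abelian-by-finite); your outline as written would prove the strictly stronger, and false, assertion that finitely generated permutable groups are finite-by-abelian.
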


One of the most important class of semigroups is that of {inverse monoids} (see \cite{Ho95,La79,Ok91}). A monoid $M$ is said to be  {\em inverse} if every element $m\in M$ possesses
a unique element $m'$, called the {\em inverse of $m$}, such that $m=mm'm$ and $m' = m'mm'$ (see \cite{Ho95,Ok91}). An important class of inverse monoids called the polycyclic monoids will be further explored in Section 4.


%

A remarkable result of Okni{\'n}ski
provides a characterization of the permutation property for  finitely generated  inverse monoids \cite{Ok91}.

\begin{fact}\textup{\cite{Ok91}}\label{factOK}
	Let $M$ be a finitely generated inverse monoid. Then the following conditions are equivalent:
	\begin{enumerate}[(i)]
		\item $M$  is permutable.
		\item The set of idempotent elements $E(M)$ of $M$ is finite and every subgroup of $M$ is finitely generated and Abelian-by-finite.
	\end{enumerate}
	
\end{fact}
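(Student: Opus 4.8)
The plan is to prove the two implications separately, relying throughout on three elementary observations together with the cited Facts \ref{factRR} and \ref{factCLM}. The observations are: (a) the property $\mathcal{P}_n$ is inherited by every subsemigroup of $M$, since a product of $n$ elements of a subsemigroup is in particular such a product in $M$; (b) in an inverse monoid the set $E(M)$ is a semilattice, hence a commutative band; and (c) a finitely generated semilattice is finite, because commutativity and idempotency force every element to equal the product of the (finitely many) generators lying below it, leaving at most $2^{k}-1$ distinct values for $k$ generators. I will also use the standard correspondence in an inverse semigroup between idempotents and Green's classes: each $\mathcal{R}$-class and each $\mathcal{L}$-class contains exactly one idempotent, and each non-null principal factor is a Brandt semigroup over a maximal subgroup.

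For the implication (i)$\Rightarrow$(ii) I first treat the subgroups. By observation (a) every subgroup $G$ of $M$ inherits $\mathcal{P}_n$ and is therefore permutable; once $G$ is known to be finitely generated, Fact \ref{factCLM} yields that $G$ is Abelian-by-finite. The finite generation of the maximal subgroups will be obtained from the finite generation of $M$ together with the finiteness of $E(M)$, via a Reidemeister--Schreier-type rewriting of words over the generators restricted to a single $\mathcal{H}$-class. It then remains to prove that $E(M)$ is finite. Here the subsemigroup $E(M)$ is periodic (being idempotent) and, by (a), permutable; by (c) it suffices to show that $E(M)$ is finitely generated, after which finiteness is immediate (or, alternatively, Fact \ref{factRR} applies directly to the periodic permutable semigroup $E(M)$).

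The finiteness of $E(M)$ is the step I expect to be the main obstacle. It is exactly here that permutability must be used in an essential way: the free inverse monoid shows that a finitely generated inverse monoid may carry an infinite idempotent semilattice, so the finite generation of $M$ alone does not suffice. My plan is to argue by contradiction: assuming $E(M)$ infinite, and writing $M=\langle X\rangle$ with $X$ finite and closed under taking inverses, I would extract from an infinite antichain or ascending family of idempotents a sequence of elements of $M$ that, when interleaved with a suitably chosen fixed generator, admits no nontrivial permutation preserving the product, contradicting $\mathcal{P}_n$. Making this interleaving precise, and controlling the idempotents produced by words over $X$, is the technical heart of this direction.

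For the converse (ii)$\Rightarrow$(i) I would start from the structure forced by $E(M)$ finite: there are then finitely many $\mathcal{R}$-classes and finitely many $\mathcal{L}$-classes, hence finitely many $\mathcal{D}$-classes, each of whose non-null principal factors is a Brandt semigroup $B(G_i,n_i)$ over one of finitely many maximal subgroups $G_1,\dots,G_m$. By hypothesis each $G_i$ is finitely generated and Abelian-by-finite, hence permutable by Fact \ref{factCLM}, while the finite semilattice $E(M)$ and the finite index sets are trivially permutable. The goal is to assemble these local permutation properties into a single $\mathcal{P}_N$ for $M$. I would prove an interchange lemma of the following shape: there is an integer $N$, depending only on $|E(M)|$, the number of $\mathcal{D}$-classes, and the permutation constants of the $G_i$, such that in any product $s_1 s_2\cdots s_N$ a pigeonhole argument on the finitely many idempotent and Green coordinates of the factors produces a block of consecutive factors whose coordinates coincide and whose ``group parts'' lie in a common $G_i$; permutability of $G_i$ then realizes a nontrivial swap of two factors of the block while leaving the remainder of the product fixed, yielding $N$-permutability of $M$. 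The delicate point in this direction is to verify that the swap inside a single Brandt coordinate genuinely leaves the global product invariant, which uses the semilattice identities together with the Brandt multiplication rules.
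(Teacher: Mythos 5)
The paper does not prove this statement: it is imported verbatim as a cited Fact from \cite{Ok91}, so there is no in-paper argument to compare your attempt against, and it can only be judged on its own merits. On that measure, what you have written is a plausible roadmap rather than a proof: both of the genuinely hard steps are left open at exactly the points where you yourself flag them as ``the technical heart'' and ``the delicate point.''

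Concretely, in (i)$\Rightarrow$(ii) the finiteness of $E(M)$ is the substance of Okni{\'n}ski's theorem, and your plan stops at ``extract \dots\ a sequence \dots\ that admits no nontrivial permutation preserving the product.'' Note that permuting idempotents alone can never contradict $\mathcal{P}_n$, since $E(M)$ is a commutative subsemigroup; the contradiction must come from interleaving idempotents with non-idempotent elements, and one must then show that \emph{every} non-identity permutation of the interleaved sequence changes the product, which requires detailed control of the idempotents generated by words over the generating set. Nothing in your sketch supplies that control, and your fallback of applying Fact~\ref{factRR} directly to $E(M)$ is unavailable because, as you yourself observe via the free inverse monoid, $E(M)$ need not be finitely generated. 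Symmetrically, in (ii)$\Rightarrow$(i) the step you defer is the whole argument: to swap two blocks and leave the global product unchanged you must pigeonhole on the Green coordinates of the \emph{running partial products}, so that the chosen block acts on the partial product as an element of a group $\mathcal{H}$-class; two factors merely sharing a non-group $\mathcal{H}$-class of a Brandt factor multiply to something in a strictly lower $\mathcal{J}$-class and generally cannot be interchanged. There is also a minor ordering issue in your first direction: you invoke the finiteness of $E(M)$ to get finite generation of the maximal subgroups before you have proved it, which is repairable by rearrangement but should be made explicit. Until the two main steps are carried out, the proposal does not establish the Fact; as it stands it is an accurate map of where the difficulty lies, which is presumably why the paper cites \cite{Ok91} rather than proving it.
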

The following result is a straightforward consequence of Fact \ref{factOK} and some basic facts of inverse monoids.
For this purpose, we recall that the algebraic structure of a monoid $M$ is described by its  Green's relations:
$\cal L, R, H, D,$ and $\cal J$. In particular, we recall that, for every $m, m'\in M$, $m \ {\cal R} \ m'$ (resp., $m\  {\cal L}\  m'$)
if $mM = m'M$ (resp., $Mm = Mm'$), and  ${\cal H} = {\cal R}\cap \cal L$
(the interested reader is referred to Ch. II and Ch. V of \cite{Ho95}, or Ch. III of \cite{LV99}). 
\begin{proposition}\label{cor-ok}
	Let $M$ be a finitely generated inverse monoid. Then $M$ is finite
	if and only if  $M$ is permutable and every finitely generated subgroup of $M$ is periodic. 
\end{proposition}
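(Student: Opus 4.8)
The plan is to prove the two implications separately. The forward implication is immediate: if $M$ is finite then it is permutable, since (as recalled above) a finite semigroup of cardinality $n$ is $r$-permutable for every $r\geq n+1$, and every finitely generated subgroup of $M$ is a subset of the finite set $M$, hence finite and in particular periodic. For the converse, assume $M$ is permutable and that every finitely generated subgroup of $M$ is periodic. Since $M$ is finitely generated and permutable, Fact~\ref{factRR} reduces the task to showing that $M$ is \emph{periodic}: once this is established, finiteness follows at once. So it suffices to prove that every $m\in M$ is periodic.

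First I would observe that, under these hypotheses, every subgroup of $M$ is finite. By Fact~\ref{factOK}, permutability of $M$ forces $E(M)$ to be finite and every subgroup $H$ of $M$ to be finitely generated and Abelian-by-finite; by assumption $H$ is then also periodic. A finitely generated Abelian-by-finite periodic group is finite, because a finite-index abelian subgroup $A\leq H$ is itself finitely generated and torsion, hence finite, whence $H$ is finite.

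Now fix $m\in M$ with inverse $m'$ and consider the idempotents $e_n=m^n(m^n)'$ and $f_n=(m^n)'m^n$. Using $(m^n)'=(m')^n$ together with the inverse-semigroup identity $(ab)(ab)'\leq aa'$ (and its dual), one obtains descending chains $e_1\geq e_2\geq\cdots$ and $f_1\geq f_2\geq\cdots$ in the natural partial order on $E(M)$; since $E(M)$ is finite, both chains stabilize, so there is an $N$ with $e_n=e_N$ and $f_n=f_N$ for all $n\geq N$. As $a\mathcal{R}b \iff aa'=bb'$ and $a\mathcal{L}b \iff a'a=b'b$ in an inverse semigroup, this places all the powers $m^N, m^{N+1},\ldots$ in a single $\mathcal{H}$-class $H$. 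Since $m^N, m^{2N}\in H$ and $m^N m^N=m^{2N}$, the standard consequence of Green's lemma characterizing group $\mathcal{H}$-classes shows that $H$ is a subgroup of $M$, and hence finite by the previous paragraph. Consequently the powers $m^n$ with $n\geq N$ take only finitely many values, forcing $m^i=m^j$ for some $N\leq i<j$; thus $m$ is periodic.

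I expect the load-bearing step to be this last one. The routine part is the inverse-semigroup bookkeeping (the identity $(ab)(ab)'\leq aa'$ and the $\mathcal{R}$-, $\mathcal{L}$-characterizations), but the crucial move is combining the stabilization of the idempotent chains---which confines every sufficiently high power of $m$ to one $\mathcal{H}$-class---with Green's lemma to recognize that $\mathcal{H}$-class as a necessarily finite subgroup. Having shown that every element is periodic, $M$ is periodic, and Fact~\ref{factRR} finally yields that $M$ is finite.
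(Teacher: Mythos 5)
Your proof is correct, but it follows a genuinely different route from the paper's in the converse direction. The paper never passes through periodicity of $M$ itself: it shows that each maximal subgroup ${\cal H}_e$ is finite (it gets permutability of ${\cal H}_e$ from Abelian-by-finiteness via Fact~\ref{factCLM} and then applies Fact~\ref{factRR} to the subgroup), then counts --- finiteness of $E(M)$ plus the fact that in an inverse monoid each ${\cal R}$- and ${\cal L}$-class contains exactly one idempotent gives finitely many ${\cal H}$-classes, and Green's lemma makes every ${\cal H}$-class equipotent to some ${\cal H}_e$, so $M$ is a finite union of finite sets. You instead prove that every subgroup is finite (by a direct group-theoretic argument --- finite-index abelian subgroups of finitely generated periodic groups are finitely generated torsion abelian, hence finite --- where the paper routes the same conclusion through Facts~\ref{factCLM} and~\ref{factRR}), then establish that every element of $M$ is periodic by trapping its high powers in a single ${\cal H}$-class via the stabilizing descending chains of idempotents $m^n(m^n)'$ and $(m^n)'m^n$, recognizing that class as a (finite) group by Green's theorem, and finally invoke Fact~\ref{factRR} a second time on the whole monoid. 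Both arguments are sound; the paper's is more structural and self-contained in its use of Green's relations and yields finiteness directly, while yours trades the equicardinality-of-${\cal H}$-classes step for the natural-partial-order bookkeeping and a second appeal to the Restivo--Reutenauer finiteness criterion. All the inverse-semigroup facts you use ($(ab)(ab)'\leq aa'$ and its dual, $a\,{\cal R}\,b\iff aa'=bb'$, and the group-${\cal H}$-class criterion) are standard and correctly applied.
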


\begin{proof}
	If $M$ is finite then trivially it is periodic and, by Fact \ref{factRR} it is permutable. 
	
	Let us prove the converse. Suppose $ M $ is permutable and every finitely generated subgroup of $ M $ is periodic.

	As an immediate consequence of a well-known property of finitely generated semigroups
	(\cite{LV99}, Ch. III Prop. 3.2.4 or \cite{La79} Lem. 3.4), the ${\cal H}_e$-class of an arbitrary idempotent $e$ of $M$ is a 
	finitely generated maximal subgroup of $M$. Since every finitely generated subgroup of $ M $ is periodic, ${\cal H}_e$ is periodic. Since $M$ is inverse and permutable, by Fact \ref{factOK}, the set of idempotent elements $E(M)$ of $M$ is finite and every finitely
	generated subgroup of $M$ is Abelian-by-finite. Therefore, ${\cal H}_e$ is Abelian-by-finite. Now by Fact \ref{factCLM}, it follows that ${\cal H}_e$ is permutable. Since ${\cal H}_e$ is periodic and permutable, we conclude that ${\cal H}_e$ is finite by Fact \ref{factRR}.
	
	Moreover, always as a consequence of the fact that $M$ is inverse,  every $\cal R$-class  and every $\cal L$-class of $M$ contain exactly one idempotent
	(\cite{Ho95}, Ch. V Thm. 1.2).	Since $E(M)$ is finite, it follows that the number of $\cal R$-classes and $\cal L$-classes of $M$ is finite which implies that the number of $\cal H$-classes of $M$ is finite. Recall now that, for every $\cal R$-class $R$ of $M$,
	the cardinality of an arbitrary $\cal H$-class contained in $R$ equals the cardinality of the unique ${\cal H}_e$-class contained 
	in  $R$, whose representative is an idempotent $e$ (\cite{Ho95}, Ch. II Lem. 2.3). Since we have proved that, for every idempotent $ e $, the $ {\cal H}_e $-class is finite, by the latter, we then get that every $ \cal H $-class is finite which implies that  $M$ is finite. 
\end{proof}

\subsection{New results on rational monoid automata over permutable monoids}

Let $\cal M$  be the family of finitely generated inverse permutable monoids.
We now prove that some meaningful properties proved in \cite{MS01} in the case of  extended finite automata over finitely generated Abelian groups can be lifted to
rational monoid automata defined by monoids of $\cal M$. 

We start with the following basic facts.

\begin{lemma}\label{bas-fac}
	Let $M$ be a monoid of $\cal M$. Then one has:
	\begin{enumerate} [(i)]
		\item $\mathfrak{L}_{Rat}(M)$ contains the family $\mathsf{REG}$.
		\item If $M$ has an infinite subgroup, then $\mathfrak{L}_{Rat}(M)$ contains the language   $\mathtt{L_{1}}= \{a^nb^n: n\geq 1\}$, over the alphabet $\Sigma = \{a, b\}$.
		
	\end{enumerate}
\end{lemma} 
\begin{proof}	\begin{enumerate} [(i)]
		\item Let $\mathtt{L}$ be a regular language accepted by a finite automaton ${\cal A} = (Q, \Sigma,M,\delta, q_0,Q_a)$.
		We can turn $\cal A$ into a rational monoid
		automaton  $\cal B$ by setting $I_0 =  I_1 =\{1\}$ and, 
		for every state $q\in Q$ and for every letter $a\in \Sigma_{\varepsilon}$,
		by replacing the state $q'$ such that $q'\in \delta (q, a)$ into the ordered pair $(q', 1)$.
		Then one easily checks that $\mathtt{L}$ is accepted by $\cal B$.
		
		\item Let $G$ be an infinite subgroup of $M$. Since $ M $ is inverse and permutable, by Fact \ref{factOK}, $G$ is finitely generated and Abelian-by-finite. By Fact \ref{factCLM}, it follows that $ G $ is permutable.
		Since $G$ is infinite, by Fact \ref{factRR}, $G$ has an element $x$ which is not periodic.
		Let $e$ be the identity of $G$. 
		Let $x'$ be the inverse of $x$ in $G$. 
		Let $\cal A$ be the rational monoid automaton over two states ${q_0, q_1}$, where $q_0$ is the (unique) initial state, $q_1$ is the (unique) final state,  $I_0=I_1=\{e\}$ and 
		the transition function $\delta$ of $\cal A$ is defined as: 
		$\delta (q_0,a) = (q_0, x),  \delta (q_0,b) = (q_1, x{'}),  \delta (q_1,b) = (q_1, x{'}),  \delta (q_1,a) = \emptyset.$
		Then one easily checks that $\mathtt{L_1}$ is accepted by $\cal A$. 
	\end{enumerate}
\end{proof}

In Thm. 1 of \cite{MS01}, it is proven that for any group $ G $, $ \mathfrak{L}(G)=\mathsf{REG} $ if and only if all finitely generated subgroups of $ G $ are finite. The following proposition gives a similar characterization for rational monoid automata defined over finitely generated inverse permutable monoids.

\begin{proposition}\label{thm1-mit-stib-equiv}
	Let $M$ be a monoid of $\cal M$. Then $\mathfrak{L}_{Rat}(M) = \mathsf{REG}$ if and only if every finitely generated subgroup of $M$ is periodic.	
	
\end{proposition}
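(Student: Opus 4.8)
The plan is to reduce the stated equivalence to the purely algebraic dichotomy supplied by Proposition \ref{cor-ok}. Since $M$ belongs to $\cal M$ it is in particular permutable, so Proposition \ref{cor-ok} tells us that the condition ``every finitely generated subgroup of $M$ is periodic'' is equivalent to ``$M$ is finite''. Hence it suffices to prove that $\mathfrak{L}_{Rat}(M) = \mathsf{REG}$ if and only if $M$ is finite, and I would organize the argument around these two implications.

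For the direction that finiteness forces $\mathfrak{L}_{Rat}(M) = \mathsf{REG}$, the inclusion $\mathsf{REG} \subseteq \mathfrak{L}_{Rat}(M)$ is already Lemma \ref{bas-fac}(i). For the converse inclusion I would argue that when $M$ is finite a rational monoid automaton can be simulated by an ordinary finite automaton. Every subset of a finite monoid is rational, so the targets $I_0, I_1$ are merely finite subsets of $M$, and one folds the register content into the control by passing to the state set $Q \times M$: a transition $(q',n) \in \delta(q,\sigma)$ becomes a transition from $(q,m)$ to $(q', m\circ n)$, the initial states are the pairs $(q_0, x_0)$ with $x_0 \in I_0$, and the accepting states are the pairs $(q,m)$ with $q$ accepting and $m \in I_1$. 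Since $Q \times M$ is finite, this is a genuine (multi-initial-state) nondeterministic finite automaton recognizing the same language, whence $\mathfrak{L}_{Rat}(M) \subseteq \mathsf{REG}$.

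For the remaining implication I would proceed by contraposition and show that if some finitely generated subgroup of $M$ fails to be periodic, then $\mathfrak{L}_{Rat}(M) \neq \mathsf{REG}$. A non-periodic subgroup contains an element $x$ of infinite order, so the cyclic group $\langle x \rangle$ is an infinite subgroup of $M$; Lemma \ref{bas-fac}(ii) then places the non-regular language $\mathtt{L_1} = \{a^n b^n : n \geq 1\}$ inside $\mathfrak{L}_{Rat}(M)$, so the two families differ. Note that this direction uses only the existence of an infinite subgroup and does not require Proposition \ref{cor-ok}.

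The argument is essentially a bookkeeping exercise once the two supporting results are in hand. The only point that needs care, and the place where the hypotheses defining $\cal M$ are genuinely consumed, is the appeal to Proposition \ref{cor-ok}: its conclusion that periodicity of all finitely generated subgroups upgrades to global finiteness rests on permutability together with the inverse structure of $M$. The simulation collapsing a rational monoid automaton over a finite monoid into a finite automaton is the main concrete step, but it presents no real difficulty.
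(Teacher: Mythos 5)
Your proposal is correct and follows essentially the same route as the paper: both directions rest on Proposition \ref{cor-ok} to upgrade periodicity of finitely generated subgroups to finiteness of $M$, on the product-automaton simulation over $Q\times M$ for the sufficiency, and on Lemma \ref{bas-fac}(ii) applied to the infinite cyclic subgroup generated by a non-periodic element for the necessity. The only (inessential) difference is that you absorb a general finite initial set $I_0$ via multiple initial states, whereas the paper first normalizes to $I_0=\{1\}$ by citing \cite{Re10}.
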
 

\begin{proof}
	Let us prove the necessity. By contradiction, assume the contrary. Hence there exists a subgroup of $M$ with an element which is not periodic. 
	By {Lemma} \ref{bas-fac}, one gets $ \{a^nb^n: n\geq 1\} \in \mathfrak{L}_{Rat}(M)$, so    $\mathfrak{L}_{Rat}(M) \neq \mathsf{REG}$.
	
	Let us prove the sufficiency.  By Proposition \ref{cor-ok},  $M$ is finite. If a language belongs to the set  $\mathfrak{L}_{Rat}(M)$, then it is accepted by a rational monoid automaton with initial set $ I_0=\{1\} $ \cite{Re10}.  
	Let us show that an arbitrary  rational monoid automaton 
	${\cal A} =(Q, \Sigma,M,\delta, q_0,Q_a)$ with rational targets $I_0 =\{1\}$ and $I_1$,
	can be simulated by a finite automaton. Indeed, let ${\cal B} =(Q\times M, \Sigma, \hat{\delta}, q_0\times I_0, Q_a\times I_1)$ be the nondeterministic finite automaton (with $\varepsilon$-moves), where 
	the transition function  $\hat{\delta}$ of $\cal B$ is defined as
	$$\hat{\delta} (\langle q,m \rangle , a):= \{\langle q', m'\rangle \in Q\times M : (q', x)\in \delta (q, a) \ \mbox{with} \  m'= mx \}$$
	for every $\langle q, m \rangle \in Q\times M$, and for
	every $a\in \Sigma_{\varepsilon}$.  
	Since $M$ is finite, $\cal B$ is well defined and it can be easily checked, by induction on the length of the computation that
	spells a word $u$, that $u$ is accepted by $\cal A$ if and only if $u$ is accepted by $\cal B$. This proves $\mathfrak{L}_{Rat}(M)\subseteq \mathsf{REG}$.
	Since by Lemma \ref{bas-fac}, $\mathsf{REG} \subseteq \mathfrak{L}_{Rat}(M)$, from the latter, we get $\mathsf{REG} = \mathfrak{L}_{Rat}(M)$.
\end{proof}

We now prove that the language $\mathtt{L_1}^* =  \{a^n b^n : n\geq 1\}^*$ is not in the family of $\mathfrak{L}_{Rat} ({\cal M})$. In order to achieve this result, we prove a lemma that is similar to the  ``Interchange Lemma" proven in \cite{MS01} Lem. 2 for Abelian groups.
\begin{lemma}\label{il-fla}
	Let $M$ be a $k$-permutable monoid and let $\mathtt{L} \in \mathfrak{L}_{Rat}(M)$. Then there exists a positive integer $m$ such that, for every word $w\in \mathtt{L}$, with
	$|w|\geq m$, and, for every factorization of 
	$w = w_1 w_2 \cdots w_m,  |w_i| \geq 1 \ (1\leq i \leq m),$
	there exist integers  $0\leq i_0 < i_1 < i_2 < \cdots < i_k  < i_{k+1} \leq m$ such that 
	
	\begin{equation}\label{il-fla0}
	w = \lambda W_1 W_2 \cdots W_k \mu,
	\end{equation}
	where
	\begin{equation}\label{il-fla00}
	\lambda = w_1 \cdots w_{i_0}, \quad \mu = w_{i_{k+1}} \cdots w_ m,\footnote{it is understood that if $i_0=0$ 
		(resp., $i_{k+1} = m$), then  $\lambda = \varepsilon$ (resp., $\mu = \varepsilon$).}
	\end{equation}
	and, for every $j=1, \ldots, k$,
	\begin{equation}\label{il-fla000}
	W_j= w_{1+ i_{j-1}} \cdots w_{i_{j}}
	\end{equation}
	\item there is a permutation $\sigma \in {\cal S}_k$, different from the identity, such that the word
	$w_\sigma = \lambda W_{\sigma(1)} W_{\sigma(2)}  \cdots W _{\sigma(k)} \mu$
	is in $\mathtt{L}$. 
	
\end{lemma}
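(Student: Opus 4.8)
The plan is to argue directly from a fixed accepting computation of a rational monoid automaton for $\mathtt{L}$, exploiting the fact that the transitions of such a machine are \emph{register-blind}: the transition function $\delta$ depends only on the current state and the input symbol, never on the current content of the register. Consequently, two segments of a computation that begin and end in the \emph{same} state may be spliced together in any order, the sole effect of a reordering being the corresponding reordering of the monoid elements multiplied into the register. This is exactly the situation in which the $k$-permutation property ${\cal P}_k$ of $M$ can be brought to bear: if the reordered product is unchanged, the reordered input is still accepted.

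Concretely, I would fix a rational monoid automaton ${\cal A} = (Q, \Sigma, M, \delta, q_0, Q_a)$ with targets $I_0, I_1$ recognizing $\mathtt{L}$, and set $m := k\,|Q| + 1$, a constant depending only on ${\cal A}$ and $k$. Given $w \in \mathtt{L}$ with $|w| \geq m$ and a factorization $w = w_1 \cdots w_m$ with each $|w_i| \geq 1$, I would fix one accepting computation; choosing some $x_0 \in I_0$, it determines a well-defined state at each of the $m+1$ boundaries of the factorization, namely $r_0 = q_0, r_1, \ldots, r_m$ (any $\varepsilon$-moves occurring between two factors being attributed to the left-hand factor so that each $r_i$ is unambiguous). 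Since there are $m+1 > k\,|Q|$ boundaries but only $|Q|$ states, the pigeonhole principle yields indices $i_0 < i_1 < \cdots < i_k$ all carrying the same state $p$. Setting $\lambda = w_1 \cdots w_{i_0}$, $W_j = w_{1+i_{j-1}} \cdots w_{i_j}$ for $1 \leq j \leq k$, and $\mu = w_{i_k+1} \cdots w_m$ (that is, taking $i_{k+1} = i_k + 1$ in the notation of the statement) yields the decomposition (\ref{il-fla0})--(\ref{il-fla000}).

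The core of the argument would then be the interchange. For each $j$ I would let $s_j \in M$ be the product of the monoid elements written to the register along the segment $W_j$; because every $W_j$ runs from state $p$ back to state $p$, the chosen computation certifies that $x_0 \circ \ell \circ s_1 \circ s_2 \circ \cdots \circ s_k \circ r \in I_1$, where $\ell$ and $r$ are the register contributions of $\lambda$ and $\mu$. Applying property ${\cal P}_k$ to the sequence $s_1, \ldots, s_k$, I would obtain a permutation $\sigma \in {\cal S}_k$, different from the identity, with $s_1 s_2 \cdots s_k = s_{\sigma(1)} s_{\sigma(2)} \cdots s_{\sigma(k)}$. By register-blindness, reading the blocks in the order dictated by $\sigma$ produces a legitimate computation on $w_\sigma = \lambda W_{\sigma(1)} \cdots W_{\sigma(k)} \mu$: it still starts at $q_0$, reaches $p$ after $\lambda$, loops $p \to p$ once for each block, and finishes in the same accept state after $\mu$. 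Its final register value is $x_0 \circ \ell \circ s_{\sigma(1)} \cdots s_{\sigma(k)} \circ r = x_0 \circ \ell \circ s_1 \cdots s_k \circ r \in I_1$, so $w_\sigma$ is accepted and hence lies in $\mathtt{L}$.

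I anticipate that the only delicate point is bookkeeping: making the boundary states $r_i$ well defined in the presence of $\varepsilon$-transitions, and checking that the spliced sequence of transitions is genuinely applicable. Both are dispatched by the register-blindness observation, since the applicability of a transition never depends on the register content; once the boundary states agree, the segments may be concatenated in any order without obstruction. The genuinely substantive ingredients lie outside this splicing, namely the pigeonhole count that forces $k+1$ equal-state boundaries (so that exactly $k$ interchangeable blocks become available) and the permutation property ${\cal P}_k$, which supplies a non-identity $\sigma$ leaving the ordered product invariant.
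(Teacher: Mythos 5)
Your proposal is correct and follows essentially the same route as the paper's proof: pigeonhole on the boundary states of a fixed accepting computation to extract $k$ loops at a common state, apply the $k$-permutation property to the register contributions of those loops, and splice the permuted blocks back into a valid accepting computation with the same final register value. Your bookkeeping is in fact slightly tidier than the paper's (your constant $m=k|Q|+1$ matches the statement's threshold $|w|\geq m$, whereas the paper's proof sets $m=\max\{k,n\}^2+1$ and then assumes $|w|\geq m^2$, and you explicitly handle $\varepsilon$-moves), but the underlying argument is the same.
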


\begin{proof}
	By hypothesis, there exists a rational monoid automaton $\cal A$, defined by a $k$-permutable monoid $M$, that
	accepts $\mathtt{L}$. Let ${\cal A} =(Q, \Sigma,M,\delta, q_0,Q_a)$ with rational targets $I_0$ and $I_1$. If $c$ is an arbitrary computation of $\cal A$,
	the element of $M$ associated with $c$ will be denoted by $m(c)$.

	Let $w$ be a word of $\mathtt{L}$ and
	let $c$ be a successful computation of $\cal A$ such that $c$ spells $w$. In particular,  
	there exists some $i_0\in I_0$ where $i_0m(c)\in I_1$. 
	
	Let $m = \max \{k, n\}^2 +1$ where $n$ is the number of states of $\cal A$. 
	Suppose now  that $|w|\geq m^2$.  By using the pigeonhole principle, there exists a state $q$ of $\cal A$
	such that $c$ can be factorized as the product of computations
	$$c = c_\lambda c_1 c_2 \cdots c_k c_\mu,$$
	where\begin{itemize}
		\item $c_\lambda = q_0 \rightarrow q,$ is a computation that spells $\lambda$;
		\item $c_\mu = q \rightarrow q_f,$ with $q_f\in Q_a$, is a computation that spells $\mu$;
		\item for every $i=1, \ldots, k,$ $c_i=q \rightarrow q$, is a computation that spells $W_i$, ($1\leq i \leq k$),
	\end{itemize}
	and $\lambda, \mu,$ and $W_i$, with $1\leq i \leq k$, are defined as in (\ref{il-fla00}) and (\ref{il-fla000}) respectively. 
	Hence, from the latter, and taking into account that $M$ is $k$-permutable, we have that there exists a
	permutation $\sigma \in {\cal S} _k \setminus \{id.\}$, such that
	\begin{equation}\label{il-fla2}
	m (c) = m(c_\lambda) m(c_1)  \cdots m(c_k) m(c_\mu)=  m(c_\lambda) m(c_{\sigma(1)})   \cdots m(c_{\sigma(k)}) m(c_\mu).
	\end{equation}
	Let $c_\sigma$ be the product of computations $c_\sigma = c_\lambda c_{\sigma(1)}   \cdots c_{\sigma(k)} c_\mu.$
	Observe that, by the definition of the computation $c_i, 1\leq i\leq k$, $c_\sigma$ is well defined as a computation of $\cal A$,
	$c_\sigma= s_0 \rightarrow s_f,$ from $s_0$ to $s_f$.  
	Moreover, by (\ref{il-fla2}), one has $m(c_\sigma) = m(c)$. Since  $c_\sigma$ spells $w_\sigma$, one has  $w_\sigma \in L$. This completes the proof. 
\end{proof}

It is shown that $\mathtt{L_1}^*=  \{a^n b^n : n\geq 1\}^* $ can not be recognized by any finite automaton over an Abelian group in \cite{MS01} Prop. 2. Now we prove that the same language is not in $  \mathfrak{L}_{Rat}(M) $ when $ M $ is a finitely generated permutable monoid.

\begin{corollary}\label{il-fla-cor}
	Let $M$ be a finitely generated permutable monoid. Then $\mathtt{L_1}^*=  \{a^n b^n : n\geq 1\}^* \notin \mathfrak{L}_{Rat}(M)$. 
\end{corollary}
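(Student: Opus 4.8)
The plan is to argue by contradiction using the Interchange Lemma (Lemma \ref{il-fla}). Since $M$ is permutable, it is $k$-permutable for some $k \geq 2$; assume towards a contradiction that $\mathtt{L_1}^* \in \mathfrak{L}_{Rat}(M)$ and let $m$ be the integer produced by Lemma \ref{il-fla} for this $k$ and this language. I would then exhibit one word $w \in \mathtt{L_1}^*$ together with one factorization $w = w_1 \cdots w_m$ into $m$ nonempty factors for which \emph{no} admissible choice of cut points and \emph{no} nontrivial $\sigma \in {\cal S}_k$ yields $w_\sigma \in \mathtt{L_1}^*$, contradicting the conclusion of the lemma. The word I would use is
$$w = a^{N_1} b^{N_1} a^{N_2} b^{N_2} \cdots a^{N_{m-1}} b^{N_{m-1}}$$
with strictly increasing block sizes $N_1 < N_2 < \cdots < N_{m-1}$ chosen large enough that $|w|$ exceeds the length threshold of the lemma. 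The crucial device is the \emph{peak-aligned} factorization $w_1 = a^{N_1}$, $w_i = b^{N_{i-1}} a^{N_i}$ for $2 \le i \le m-1$, and $w_m = b^{N_{m-1}}$, whose internal cut points all fall at the peaks of the triangles (heights $N_1, \ldots, N_{m-1}$), never at the valleys where the height function returns to $0$.

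Next I would use the standard characterization of $\mathtt{L_1}^*$ in terms of the height function $h$ (the running excess of $a$'s over $b$'s): a word lies in $\mathtt{L_1}^*$ iff $h \geq 0$ throughout, $h$ ends at $0$, and every internal local minimum of $h$ sits at height $0$. Because the factorization hides every valley strictly inside a factor, every super-block $W_j = w_{1+i_{j-1}} \cdots w_{i_j}$ produced by the lemma begins and ends at a peak; writing $v_j = N_{i_j}$ (with $v_0 = N_{i_0}$, taken to be $0$ if $i_0 = 0$) these peak heights satisfy $v_0 < v_1 < \cdots < v_k$, and $W_j$ carries charge $\Delta_j = \#a(W_j) - \#b(W_j) = v_j - v_{j-1}$ while its first internal valley lies a distance $v_{j-1}$ below its starting peak. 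When the blocks are relocated as $w_\sigma = \lambda W_{\sigma(1)} \cdots W_{\sigma(k)} \mu$, the block in slot $j$ starts at cumulative height $H_j$, where $H_1 = v_0$ and $H_{j+1} = H_j + \Delta_{\sigma(j)}$, so its first inherited valley appears at height $H_j - v_{\sigma(j)-1}$. Membership of $w_\sigma$ in $\mathtt{L_1}^*$ forces every such valley to height $0$, i.e. $H_j = v_{\sigma(j)-1}$ for all $j$; feeding this into the recurrence gives $H_1 = v_0 \Rightarrow \sigma(1)=1$, then $H_2 = v_1 \Rightarrow \sigma(2) = 2$, and by induction $\sigma = \mathrm{id}$. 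Since the lemma supplies a permutation $\sigma \neq \mathrm{id}$ with $w_\sigma \in \mathtt{L_1}^*$, this is the desired contradiction.

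The step I expect to be the main obstacle is precisely guaranteeing robustness against the two ways in which a nontrivial permutation could harmlessly preserve membership, given that the cut points are chosen adversarially by the lemma. First, $\mathtt{L_1}^*$ is closed under permuting whole $a^n b^n$ blocks, so if any super-block boundary fell at a valley the adversary could permute entire blocks and stay in the language; the peak-aligned factorization is designed exactly to forbid this, since no piece boundary is a block boundary. Second, over a binary alphabet one must rule out \emph{invisible} permutations (those fixing the string $w$, e.g. swaps of commuting factors that are powers of a common word): these are subsumed by the argument above, because $w_\sigma = w$ would in particular force all valleys to height $0$ and hence $\sigma = \mathrm{id}$ --- here the strict monotonicity $v_0 < v_1 < \cdots < v_k$, inherited from the increasing block sizes, is what makes the pinning $H_j = v_{\sigma(j)-1}$ collapse to the identity. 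The only remaining loose ends are the two extreme factors $w_1 = a^{N_1}$ and $w_m = b^{N_{m-1}}$, which contain no valley; I would dispose of these separately by noting that relocating a purely ascending block out of the first slot makes $w_\sigma$ begin with $b$, and relocating a purely descending block out of the last slot makes $w_\sigma$ end with $a$, each immediately violating membership in $\mathtt{L_1}^*$, after which the valley-pinning argument applies verbatim to the remaining, valley-containing super-blocks.
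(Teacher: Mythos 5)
Your proposal is correct and follows essentially the same route as the paper: the paper uses the word $(ab)(a^2b^2)\cdots(a^\ell b^\ell)$ (your $N_i=i$) with exactly the same peak-aligned factorization $w_i=b^{i-1}a^i$, and then argues that for the minimal $i$ with $\sigma(i)\neq i$ the relocated block contributes a $b$-run $b^\beta$ with $\beta$ exceeding the current height, driving the prefix below zero. Your valley-pinning induction $H_j=v_{\sigma(j)-1}\Rightarrow\sigma=\mathrm{id}$ is just a slightly more systematic packaging of that same contradiction.
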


\begin{proof}
	By contradiction, assume the contrary. Thus there exists a rational monoid automaton $\cal A$ defined by a permutable monoid
	such that $\mathtt{L_1}^*$ is accepted by $\cal A$. Let us consider the word of $\mathtt{L_1}^*$, 
	$w = (ab)(a^2b^2) \cdots (a^\ell b^\ell)$, for $\ell\in \N$ sufficiently large. Let us consider the following factorization for $w$:
	\begin{equation}\label{eqfla}
	a\cdot ba^2 \cdot b^2 a^3  \cdots b^{i-1}a^i \cdot b^i a^{i+1} \cdots a^{\ell-1}\cdot  b^{\ell -1} a^\ell \cdot b^\ell,
	\end{equation}
	that is, $w= w_1 \cdots w_{\ell +1},$ where
	$w_i = b^{i-1} a^ {i}, \ i=1, \ldots, \ell,$ and $ w_{\ell+1} = b^\ell.$
	Let us apply Lemma \ref{il-fla} to $w$ with respect to the  factorization (\ref{eqfla}).
	Hence we can write $w = \lambda W_1 W_2 \cdots W_k \mu$ in the form of the factorization (\ref{il-fla0}) of {Lemma} \ref{il-fla},
	and  there exists a permutation $\sigma \in {\cal S}_k\setminus \{id.\}$, such that the word
	$w_\sigma = \lambda W_{\sigma(1)} W_{\sigma(2)}  \cdots W _{\sigma(k)} \mu$ is in $\mathtt{L_1}^*$. 
	
	Let $\underline{k}= \{1, \ldots, k\}$ and let $i$ be the minimal number of $\underline{k}$ such that $\sigma (i) \neq i$. 
	Since $\sigma \neq id.$, this number exists  and $i<k$. Moreover the number  $j\in  \underline{k}$ such that $i = \sigma (j)$ is strictly
	larger than $i$. Indeed, otherwise $j\leq i$, would  imply $j < i,$ so contradicting the minimality of $i$. 
	Now, according to (\ref{il-fla00}) and (\ref{il-fla000}) of {Lemma} \ref{il-fla}, and taking into account that $i<j$, there exist positive integers $\alpha, \beta$,
	with $ \beta > 1  + \alpha$ such that
	
	$$\lambda W_1 \cdots W_{i-1} = a(ba^2) \cdots (b^{\alpha}a^{1+ \alpha}),$$
	where 
	$$W_{j} =  (b^{\beta}a^{1+ \beta}) \cdots (b^{ \gamma}a^{1+ \gamma}), \quad \beta \leq \gamma.$$
	
	Hence we get 	
	\begin{align*}	
	w_\sigma &= \lambda W_{1} \cdots W_{i-1} W_j W _{\sigma(i+1)}  \cdots W _{\sigma(k)} \mu\\
	&=(aba^2 \cdots b^{ \alpha}a^{1+ \alpha})(b^{\beta}a^{1 + \beta} \cdots b^{ \gamma}a^{1 + \gamma})u, \quad u\in A^*,
	\end{align*}
	so that $aba^2 \cdots b^{\alpha}a^{1+ \alpha}b^{ \beta}$ is a prefix of $w_\sigma$.  Since $\beta > 1 + \alpha$,
	the latter contradicts the fact that $w_\sigma \in \mathtt{L_1}^*.$ 
	Hence $\mathtt{L_1}^*\notin \mathfrak{L}_{Rat}(M)$.  \end{proof}

We can conclude that the following closure properties proven in \cite{MS01} Thm. 8 for finite automata over commutative groups also hold for rational monoid automata over finitely generated inverse permutable monoids.

\begin{corollary}\label{cor-mitstib}
	Let $M$ be a monoid of $\cal M$. Then either $\mathfrak{L}_{Rat}(M) = \mathsf{REG}$, or $\mathfrak{L}_{Rat}(M)$ is closed neither under Kleene star $^*$, nor
	under substitutions.
\end{corollary}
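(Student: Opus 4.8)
The plan is to prove the dichotomy by distinguishing two cases according to whether $\mathfrak{L}_{Rat}(M)$ coincides with $\mathsf{REG}$, and to show that in the remaining case both closure properties fail simultaneously. The key observation is that essentially all the technical work has already been carried out in the preceding results: Proposition \ref{thm1-mit-stib-equiv} supplies the algebraic criterion that separates the two cases, Lemma \ref{bas-fac} produces the witness language $\mathtt{L_1}$, and Corollary \ref{il-fla-cor} provides the decisive non-membership $\mathtt{L_1}^* \notin \mathfrak{L}_{Rat}(M)$. The corollary is therefore essentially an assembly of these three facts.

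First I would dispose of the trivial alternative: if $\mathfrak{L}_{Rat}(M) = \mathsf{REG}$, the first disjunct of the statement holds and there is nothing further to prove. So assume $\mathfrak{L}_{Rat}(M) \neq \mathsf{REG}$. By Proposition \ref{thm1-mit-stib-equiv} this forces the existence of a finitely generated subgroup $G$ of $M$ that is not periodic. Since a non-periodic group contains an element whose powers are pairwise distinct, $G$ is infinite, and hence $M$ has an infinite subgroup. Invoking Lemma \ref{bas-fac}(ii) I then conclude that $\mathtt{L_1} = \{a^n b^n : n\geq 1\} \in \mathfrak{L}_{Rat}(M)$.

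Next I would deduce the failure of both closure properties from the single fact that $\mathtt{L_1}^* \notin \mathfrak{L}_{Rat}(M)$, which holds by Corollary \ref{il-fla-cor} because $M$ is finitely generated and permutable. For Kleene star, closure would yield $\mathtt{L_1}^* \in \mathfrak{L}_{Rat}(M)$ directly from $\mathtt{L_1} \in \mathfrak{L}_{Rat}(M)$, a contradiction. For substitutions, I would take the regular language $c^*$ over a one-letter alphabet, which lies in $\mathfrak{L}_{Rat}(M)$ by Lemma \ref{bas-fac}(i), together with the substitution sending $c$ to $\mathtt{L_1} \in \mathfrak{L}_{Rat}(M)$; its image is exactly $\mathtt{L_1}^*$, so closure under substitution would again contradict Corollary \ref{il-fla-cor}.

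I do not expect a genuine obstacle here, as the argument is a short deduction from the established results rather than a fresh construction. The only points requiring slight care are the implication ``$\mathfrak{L}_{Rat}(M) \neq \mathsf{REG}$ entails that $M$ has an infinite subgroup,'' which rests on the fact that non-periodicity of a group element produces an infinite finitely generated subgroup, and the choice of witnesses ($c^*$ together with $c \mapsto \mathtt{L_1}$) realizing $\mathtt{L_1}^*$ as the image under a substitution of an in-family regular language.
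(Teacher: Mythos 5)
Your proposal is correct and follows essentially the same route as the paper: split on the dichotomy of Proposition \ref{thm1-mit-stib-equiv}, use Lemma \ref{bas-fac} to get $\mathtt{L_1}\in\mathfrak{L}_{Rat}(M)$ in the nontrivial case, and derive both non-closure properties from $\mathtt{L_1}^*\notin\mathfrak{L}_{Rat}(M)$ (Corollary \ref{il-fla-cor}). Your explicit witness $c^*$ with the substitution $c\mapsto\mathtt{L_1}$ is just a spelled-out version of the paper's remark that non-closure under substitutions follows from containment of $\mathsf{REG}$ and non-closure under Kleene star.
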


\begin{proof}
	If every finitely generated subgroup of $M$ is periodic, by Proposition \ref{thm1-mit-stib-equiv}, $\mathfrak{L}_{Rat}(M) = \mathsf{REG}$. 
	Otherwise, by Lemma \ref{bas-fac}, $\mathfrak{L}_{Rat}(M)$ contains the language   $\mathtt{L_1}= \{a^nb^n: n\geq 1\}$ and, by Corollary \ref{il-fla-cor},
	$\mathtt{L_1} ^* \notin \mathfrak{L}_{Rat}(M)$. 
	The non-closure under substitutions follows since $\mathfrak{L}_{Rat}(M)$ contains $\mathsf{REG}$ and it is not closed under Kleene star. 
\end{proof}

We finally discuss a property of the class of languages recognized by rational monoid automata. First, let us recall some definitions about semigroups.

Let $ S $ be a semigroup. $ S^1 $ is the semigroup obtained from $ S $ by adjoining an identity element to $ S $. Similarly $ S^0 $ is the semigroup obtained from $ S $ by adjoining a zero element to $ S $.

An \textit{ideal} $ I $ of a semigroup $ S $ is a subset of $ S $ with the property that $ S^1IS^1 \subseteq I $. A semigroup is called \textit{simple} if it contains no proper ideal. A semigroup $ S $  with a zero element is called 0-$ simple $ if the only ideals of $ S $ are $ \{0\} $ and $ S $ itself, and $ SS \neq \{0\} $.

An idempotent element is called \textit{primitive} if for every non-zero idempotent $ f $, $ ef=fe=f $ implies that $ e=f $. A semigroup is \textit{completely simple} (\textit{completely 0-simple}) if it is simple (0-simple) and contains a primitive idempotent.

Among the closure properties fulfilled by $\mathfrak{L}_{Rat}(M)$, it is known that, for an arbitrary finitely generated monoid $M$, $\mathfrak{L}_{Rat}(M)$ is a full trio \cite{Re10}.
We recall that a family of languages is said to be a full trio  if
it is closed under taking morphisms, intersection with regular languages, and inverse morphisms.  
This can be reinforced in the case of monoids that are completely simple or completely $0$-simple (see \cite{Ho95}, Ch. III for a basic introduction 
to such structures).

For this purpose, let us recall the following theorem of \cite{RK10}. 
\begin{fact}\textup{\cite{RK10}}\label{factRK} 	Let $M$ be a completely simple or completely $0$-simple monoid with maximal non-zero subgroup $G$. 
	Then $\mathfrak{L}_{Rat}(M) = \mathfrak{L}_{Rat}(G)= \mathfrak{L}_{1}(G)$. 
\end{fact}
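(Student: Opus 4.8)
The plan is to establish the two equalities separately, reducing everything to the structure theory of completely (0-)simple semigroups. First I would dispatch the easier identity $\mathfrak{L}_{Rat}(G) = \mathfrak{L}_{1}(G)$ for the maximal subgroup $G$, and then prove $\mathfrak{L}_{Rat}(M) = \mathfrak{L}_{Rat}(G)$ by exploiting the Rees matrix representation of $M$ over $G$.

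For $\mathfrak{L}_{Rat}(G) = \mathfrak{L}_{1}(G)$, the inclusion $\mathfrak{L}_{1}(G) \subseteq \mathfrak{L}_{Rat}(G)$ is immediate, since an ordinary $G$-automaton is the special case $I_0 = I_1 = \{1\}$. For the converse I would absorb the rational targets into the computation, using invertibility: given a rational monoid automaton over $G$ with rational targets $I_0, I_1$, the set $I_1^{-1}$ is again rational because $G$ is a group. I would build an ordinary $G$-automaton that, via a block of $\varepsilon$-transitions realizing a finite automaton for $I_0$, nondeterministically writes some $i_0 \in I_0$ into the register, then simulates the original computation, and finally, via a block of $\varepsilon$-transitions realizing a finite automaton for $I_1^{-1}$, writes some $i_1^{-1}$. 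The register then holds $i_0 x i_1^{-1}$, which equals the identity exactly when $i_0 x \in I_1$, reproducing the original acceptance condition.

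For $\mathfrak{L}_{Rat}(M) = \mathfrak{L}_{Rat}(G)$ I would invoke the Rees structure theorem to write $M$ as a Rees matrix semigroup $\mathcal{M}[G; I, \Lambda; P]$ over its maximal subgroup $G$ (with a zero adjoined in the $0$-simple case), with elements $(i,g,\lambda)$ and product $(i,g,\lambda)(j,h,\mu) = (i,\, g\, p_{\lambda j}\, h,\, \mu)$. The inclusion $\mathfrak{L}_{Rat}(G) \subseteq \mathfrak{L}_{Rat}(M)$ follows because $G$ sits inside $M$ as the $\mathcal{H}$-class of an idempotent, rational subsets of $G$ remain rational in $M$, and products of elements of $G$ computed in $M$ stay in $G$; thus any $G$-automaton is literally an $M$-automaton accepting the same language. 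The substantive inclusion is $\mathfrak{L}_{Rat}(M) \subseteq \mathfrak{L}_{Rat}(G)$: given a rational monoid automaton $\mathcal{A}$ over $M$, I would build one, $\mathcal{B}$, over $G$ whose finite-state component additionally records the current column index $\lambda \in \Lambda$. Since in a product $(i_0,g_0,\lambda_0)(i_1,g_1,\lambda_1)\cdots(i_n,g_n,\lambda_n)$ the outer coordinates $i_0, \lambda_n$ are fixed by the first and last factors while the $G$-part is $g_0\, p_{\lambda_0 i_1} g_1 \cdots p_{\lambda_{n-1} i_n} g_n$, each time $\mathcal{A}$ writes $(i,g,\lambda)$ the machine $\mathcal{B}$ writes $p_{\lambda_{prev}\, i}\, g$ to its $G$-register and updates the recorded column index to $\lambda$. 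The rational targets of $\mathcal{A}$ are converted into rational targets of $\mathcal{B}$ by fixing the outer coordinates and projecting onto the group part.

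The main obstacle I anticipate lies in the completely $0$-simple case and in the translation of the targets. First, the simulation needs $I$ and $\Lambda$ finite so that the column index fits into a finite state; this holds when $M$ is finitely generated, which I would assume as the standing hypothesis. Second, in the $0$-simple case a product collapses to $0$ whenever some sandwich entry $p_{\lambda_{prev}\, i}$ is $0$, so I must let $\mathcal{B}$ detect this event, route it to a separate behaviour, and then decide acceptance according to whether $0 \in I_1$. Finally, I must verify that fixing the outer coordinates of a rational subset of $M$ and stripping them off yields a rational subset of $G$; establishing this via closure of rational sets under morphisms together with the explicit product formula is the technical heart of the reduction and the step I expect to require the most care.
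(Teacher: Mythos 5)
The paper does not actually prove this statement: it is imported as a Fact from \cite{RK10}, so there is no in-paper argument to compare against. Your reconstruction follows what is essentially the original Render--Kambites route: reduce $\mathfrak{L}_{Rat}(G)$ to $\mathfrak{L}_{1}(G)$ by absorbing the rational targets into $\varepsilon$-blocks (using that $I_1^{-1}$ is again rational because inversion is an anti-automorphism of the group), and reduce $M$ to $G$ through the Rees matrix representation, storing the current column index in the finite control and folding the sandwich entries into the group register. Both reductions are sound, and the step you single out as the technical heart --- that fixing the outer coordinates of a rational subset of $M$ and projecting onto the group part yields a rational subset of $G$ --- does go through: rational sets are exactly the behaviours of finite $M$-labelled automata, and the same state augmentation (remembering the previous column index) converts such an automaton into a finite $G$-labelled one, after discarding paths that hit a zero sandwich entry. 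Two small remarks. First, your worry about $I$ and $\Lambda$ needing to be finite is unnecessary even without a finite-generation hypothesis: the transitions and rational targets of a given automaton involve only a finitely generated part of $M$, and products in a Rees matrix semigroup never introduce row or column indices beyond those of the factors, so only finitely many indices ever occur. Second, a completely simple (resp.\ completely $0$-simple) \emph{monoid} is already a group (resp.\ a group with zero adjoined), since the identity is a primitive idempotent and this forces $|I|=|\Lambda|=1$; so for the statement as literally phrased the Rees machinery is needed only degenerately, and it is for the semigroup version in \cite{RK10} that your argument does real work --- where it is correct. You should also make explicit that the finitely many guesses of initial and final outer coordinates are handled by closure of $\mathfrak{L}_{1}(G)$ under finite union, but that is routine.
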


A subset $ S \subseteq \mathbb{N}^n  $ is a \textit{linear} set if $ S=\{v_0 + \Sigma_{i=1}^k c_iv_i | c_1,\dots,c_k \in \mathbb{N}\} $ for some $ v_0,\dots,v_k \in\mathbb{N}^n  $. A \textit{semi-linear} set is a finite union of linear sets. A full trio is called \textit{semi-linear} if the 
Parikh image of every language of the family is semi-linear. 
Moreover a language $\mathtt{L}\subseteq A^*$ is said to be {\em bounded} if there exist words $u_1, \ldots, u_n \in A^+$ such that
$\mathtt{L} \subseteq u_1 ^* \cdots u_n^*$. A bounded language is said to be {\em (bounded) semi-linear} if there exists a semi-linear set $\cal B$
of $\N^n$ such that $\mathtt{L} = \{u_1 ^{b_1} \cdots u_n ^{b_n} : (b_1, \ldots, b_n)\in {\cal B}\}$. 
We have then the following corollary. 
\begin{corollary}\label{flafin}
	Let $M$ be a finitely generated completely simple or completely $0$-simple monoid with maximal non-zero subgroup $G$. 
	If $M$ is permutable, then $\mathfrak{L}_{Rat}(M)$  is a semi-linear full trio. In particular, every bounded language in $\mathfrak{L}_{Rat}(M)$  is (bounded) semi-linear.  
\end{corollary}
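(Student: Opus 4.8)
The plan is to reduce the whole statement to a free-abelian register, where semi-linearity is classical. First I would invoke Fact \ref{factRK}: since $M$ is completely simple or completely $0$-simple with maximal non-zero subgroup $G$, we have $\mathfrak{L}_{Rat}(M) = \mathfrak{L}_1(G)$, so it suffices to analyze ordinary group automata over $G$. The \emph{full trio} half of the conclusion is then immediate, because $M$ is finitely generated and hence $\mathfrak{L}_{Rat}(M)$ is a full trio by \cite{Re10}. It therefore remains to prove (a) that every language of $\mathfrak{L}_1(G)$ has a semi-linear Parikh image, and (b) the bounded-language refinement.

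Next I would pin down the structure of $G$. By the structure theory of finitely generated completely (0-)simple semigroups, the Rees matrix representation $\mathcal{M}^{(0)}(G; I, \Lambda; P)$ of $M$ has finite index sets $I,\Lambda$ and a finitely generated group $G$. Moreover $G$, being a subsemigroup of the permutable monoid $M$, inherits the permutation property: the permutation identity that holds for all sequences over $M$ holds \emph{a fortiori} for sequences over $G$. Hence $G$ is a finitely generated permutable group, and Fact \ref{factCLM} shows that $G$ is Abelian-by-finite, so it has a normal abelian subgroup $A \trianglelefteq G$ of finite index; since a finite-index subgroup of a finitely generated group is finitely generated, $A \cong \mathbb{Z}^r \oplus F$ with $F$ finite.

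The core of the argument, and the step I expect to be the main obstacle, is reducing a $G$-automaton to a $\mathbb{Z}^r$-automaton by absorbing the finite data $G/A$ and $F$ into the finite control. Fixing a transversal $\{t_1,\ldots,t_s\}$ of $A$ in $G$ with $t_1 = 1$, I would record the current coset $t_i$ in the state while keeping only the $A$-part of the register. Writing a transition's group element as $b t_j$ with $b \in A$, and using $t_i b t_i^{-1} = \phi_i(b) \in A$ (normality) together with $t_i t_j = c_{ij}\, t_{k(i,j)}$ where $c_{ij} \in A$, the register update becomes $a \mapsto a\,\phi_i(b)\,c_{ij}$ with new coset $t_{k(i,j)}$, a well-defined $A$-valued update depending only on $i$ and the transition. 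The acceptance condition ``register $= 1$ in $G$'' becomes ``coset $= t_1$ and $A$-part $= 1$'', yielding an equivalent $A$-automaton; a second, easier pass absorbs the torsion $F$ into the state (multiplication being componentwise now), leaving a $\mathbb{Z}^r$-automaton for the same language. The delicate point is the bookkeeping of the cocycle $c_{ij}$ and the conjugation automorphisms $\phi_i$, which is exactly where normality of $A$ is used. Finally, for a $\mathbb{Z}^r$-automaton semi-linearity is classical: viewing the transition graph as a finite automaton over its edge set, Parikh's theorem makes the set of edge-multisets of accepting paths semi-linear; intersecting with the homogeneous condition ``the edge vectors sum to $0$'' preserves semi-linearity, and projecting edges onto input letters (a linear map) gives a semi-linear Parikh image of the accepted language. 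This settles (a).

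For (b), I would show that in any semi-linear full trio every bounded language is (bounded) semi-linear. Given $\mathtt{L} \subseteq u_1^* \cdots u_n^*$ in $\mathfrak{L}_{Rat}(M)$, introduce fresh letters $c_1,\ldots,c_n$ and the morphism $h : c_i \mapsto u_i$. Then $K = h^{-1}(\mathtt{L}) \cap c_1^* \cdots c_n^*$ lies in $\mathfrak{L}_{Rat}(M)$ by the full trio closure properties, and by part (a) its Parikh image is semi-linear. Since $K = \{c_1^{b_1} \cdots c_n^{b_n} : u_1^{b_1} \cdots u_n^{b_n} \in \mathtt{L}\}$, its Parikh image is precisely the index set $\mathcal{B} = \{(b_1,\ldots,b_n) : u_1^{b_1} \cdots u_n^{b_n} \in \mathtt{L}\}$; hence $\mathcal{B}$ is semi-linear and $\mathtt{L}$ is (bounded) semi-linear, which completes the argument.
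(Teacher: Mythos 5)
Your argument is correct and follows the paper's own route: reduce to $\mathfrak{L}_1(G)$ by Fact~\ref{factRK}, note that $G$ is finitely generated and permutable (hence Abelian-by-finite by Fact~\ref{factCLM}), and descend to $\mathbb{Z}^r$, where semi-linearity is classical. The only difference is one of packaging: you prove in-line (the transversal/cocycle reduction to a finite-index abelian subgroup, the Parikh argument for $\mathbb{Z}^r$-automata, and the $h^{-1}(\mathtt{L})\cap c_1^*\cdots c_n^*$ trick for the bounded case) what the paper obtains by citing Cor.~3.3 of \cite{Co05}, Thm.~7 of \cite{MS01}, \cite{Gr78}, and Ibarra's theorem \cite{IS15}.
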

\begin{proof}
	Let $G$ be the maximal subgroup of $M$. By Fact \ref{factRK}, $ \mathfrak{L}_{Rat}(M)=\mathfrak{L}_{1}(G) $. Since  $M$ is a finitely generated   semigroup, then $G$ is a finitely generated
	group as well (\cite{LV99}, Ch. III Prop. 3.2.4 or \cite{La79} Lem. 3.4). Moreover, since $G$ is permutable,  
	by Fact \ref{factCLM}, $G$ has a finitely generated Abelian subgroup $H$ of finite index in $G$. By Cor. 3.3 of \cite{Co05}, $ \mathfrak{L}_{1}(G)=\mathfrak{L}_{1}(H) $ and by Thm. 7 of \cite{MS01}, one has $\mathfrak{L}_{1}(H) = \mathfrak{L}_{1}(\Z^m)$ for some $ m\geq 1 $, where $\Z^m$ is the
	free Abelian group over $m$ generators. 
	We now recall that, by a result of \cite{Gr78} $\mathfrak{L}_{1}(\Z^m)$ is a semi-linear full trio.
	Let $\mathtt{L}$ be a bounded language in $\mathfrak{L}_{Rat}(M)$. By the previous considerations, $\mathtt{L}\in \mathfrak{L}_{1}(\Z^m)$ and the fact that
	$\mathtt{L}$  is (bounded) semi-linear follows from a classical theorem of Ibarra
	\cite{IS15}.
\end{proof}

\section{Valence grammars and valence automata}
In this section, we are going to focus on valence grammars and valence automata. We start by making a connection between valence automata and valence PDA, proving that valence PDA are only as powerful as valence automata. Then, we extend some results proven for regular valence grammars to context-free valence grammars.

\subsection{Equivalence of finite and pushdown automata with valences}\label{sec: pdav}

A well known class of inverse monoids is that of the polycyclic monoids. Let $ X $ be a finite alphabet and let $X^*$ be the free monoid of words over $X$. For each symbol $ x\in X $, let $P_x$ and $Q_x$ be functions from $X^*$ into $X^*$
defined as follows:  for every $u\in X^*$,
\begin{align*}
P_x(u)= ux,\quad 
Q_x(ux)=u.
\end{align*}
Note that $Q_x$ is a partial function from $X^*$ into $X^*$ whose domain is the language $X^*x$. 
The submonoid of the monoid of all partial functions on $X^*$ generated by the
set of functions $\{P_x,\, Q_x \ |\ \ x \in X   \}$ turns out to be an inverse monoid, denoted by $P(X)$, called the \textit{polycyclic monoid} on $X$. It was explicitly studied by Nivat and Perrot in \cite{NP70} and, in the case $ |X|=1 $, the monoid $ P(X) $ coincides with
the well-known structure of \textit{bicyclic monoid} which will be denoted by B (see \cite{Ho95}). 
Polycyclic monoids have several applications in formal language theory and, in particular, 
define an interesting storage model of computation for the recognization of formal languages \cite{Co05,Gi96,Ka09,Ni70,NP70}. 
%
%

For any element $ x \in X, $ $P_xQ_x=1 $ where $ 1 $ is the identity element of $ P(X) $ and for any two distinct elements $ x, y \in X $, $ P_xQ_y $ is the empty partial function which represents the zero element of $ P(X) $. The partial functions $ \{P_x,Q_x\} $ model the operation of pushing and popping $ x$ in a PDA, respectively. In order to model popping and pushing the empty string, let us define $ P_{\varepsilon}$ and $Q_{\varepsilon}$ as $ P_{\varepsilon}=Q_{\varepsilon}=1 $. The equivalence between PDA with stack alphabet X and P(X)-automata is investigated in various papers \cite{Co05,Gi96,Ka09}. Note that a $ P(X) $-automaton is a valence automaton over $ P(X) $. 

We will focus on the polycyclic monoid of rank 2, which will be denoted by $ \textup{P}_2 $, since it contains every polycyclic monoid of countable rank.

\begin{theorem}\label{thm: pdap2} For any monoid $ M $, 
	$\mathfrak{L}(\textup{Val},\textup{PDA},M) = \mathfrak{L}(\textup{Val},\textup{NFA},\textup{ P}_2 \times M ) $.
\end{theorem}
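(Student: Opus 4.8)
The plan is to prove the two inclusions separately, in each case transferring the stack of the pushdown automaton into the $P_2$-component of the register while keeping the valence in the $M$-component, so that the direct product $P_2 \times M$ simultaneously records the stack configuration and the accumulated valence. The guiding principle is the well-known correspondence between pushdown automata and $P(X)$-automata: the generators $P_x$ and $Q_x$ model pushing and popping $x$, the relations $P_x Q_x = 1$ and $P_x Q_y = 0$ (for $x \neq y$) force each pop to match the symbol on top of the stack, and an element of $P(X)$ equals the identity $1$ exactly when the corresponding sequence of stack operations is balanced and leaves the stack empty, while the absorbing zero $0$ records any illegal pop.

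For the inclusion $\mathfrak{L}(\textup{Val},\textup{PDA},M) \subseteq \mathfrak{L}(\textup{Val},\textup{NFA},\textup{P}_2 \times M)$, I would start from a valence PDA $\mathcal{P} = (Q, \Sigma, \Gamma, \delta, q_0, Q_a)$ over $M$. Since $\Gamma$ is finite, $P(\Gamma)$ embeds into $P_2$ through a fixed monoid morphism $\iota$, using that $P_2$ contains every polycyclic monoid of countable rank. I then build a valence automaton over $P_2 \times M$ on the same state set, replacing each transition $(q', Y) \in \delta(q, \sigma, X)$ of valence $m$ by a transition from $q$ to $q'$ reading $\sigma$ with valence $(\iota(Q_X P_Y), m)$, where $Q_\varepsilon = P_\varepsilon = 1$. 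A computation of the automaton then has register value $(\iota(z), v)$, where $z \in P(\Gamma)$ is the product of the stack operations and $v \in M$ is the valence; by the defining properties of $P(\Gamma)$, one has $\iota(z) = 1$ precisely when the run describes a legal pushdown computation that terminates with empty stack, so acceptance at $(1,1)$ in $P_2 \times M$ matches acceptance of $\mathcal{P}$. To make this correspondence exact I would first normalise $\mathcal{P}$ so that it accepts by final state together with empty stack (the standard bottom-marker construction, with all auxiliary transitions carrying valence $1 \in M$, so the valence condition is undisturbed).

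For the reverse inclusion, I would take a valence automaton over $P_2 \times M$ and first refine it so that the $P_2$-component of every transition is one of the generators $P_a, P_b, Q_a, Q_b$ or the identity. Since $P_2$ is generated by these four partial functions, any transition of valence $(\pi, m)$ with $\pi = g_1 \cdots g_r$ a product of generators can be split, through fresh intermediate states, into a chain of transitions of valences $(g_1, 1), \ldots, (g_{r-1}, 1), (g_r, m)$, reading the input letter of the original transition on the first step and $\varepsilon$ on the others; any transition whose $P_2$-component is $0$ can be deleted, since the register can never return to $1$ thereafter. Each surviving transition is then read as a pushdown instruction -- identity means ``no stack operation,'' $P_x$ means ``push $x$,'' and $Q_x$ means ``pop $x$'' -- carrying the valence $m$, yielding a valence PDA over $M$ whose empty-stack, final-state acceptance coincides with reaching $(1,1)$ in the original automaton.

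The main obstacle I anticipate is the bookkeeping around the acceptance condition: one must ensure that ``register equal to $1$ in the $P_2$-component'' is matched against empty-stack acceptance rather than plain final-state acceptance, and that the absorbing zero correctly discards the runs with mismatched pops. Once the pushdown acceptance mode is fixed consistently and the generator decomposition of the $P_2$-valences is carried out, the two simulations reduce to routine inductions on the length of a computation, showing that at every step the register faithfully tracks the pair consisting of the stack contents and the accumulated valence.
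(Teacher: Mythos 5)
Your proposal is correct and takes essentially the same route as the paper's proof: in one direction you pair each stack instruction $Q_X P_Y$ with the $M$-valence and embed $P(\Gamma)$ into $\textup{P}_2$, and in the other you decompose each $\textup{P}_2$-valence into elementary push/pop generators realized through fresh intermediate states. The only (cosmetic) differences are that the paper uses the pop-then-push normal form $Q_{x_1}\cdots Q_{x_n}P_{y_1}\cdots P_{y_o}$ rather than an arbitrary generator word, and your explicit attention to the empty-stack acceptance convention is a point the paper glosses over.
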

\begin{proof}Let $ \mathtt{L}\in \mathfrak{L}(\textup{Val},\textup{PDA},M)$ and $\mathcal{P}=\{Q,\Sigma,X,\delta,q_0, F,M\} $ be a valence PDA recognizing $ \mathtt{L}$. We know that a PDA with stack alphabet $ X $ is equivalent to a valence automaton over $ P(X) $. Hence, $ \mathcal{P} $ can be seen as an \textup{NFA} where two distinct valences (one in $P(X) $ and one in $ M $) are assigned to each transition. An equivalent valence automaton $ \mathcal{M}= \{Q,\Sigma,P(X) \times M,\delta',q_0, F\} $ can be constructed, where a valence from the monoid $P(X) \times M  $ is assigned to each transition. Recall that the partial functions $ Q_a $ and $ P_b $ model the operations of popping $ a $ and pushing $ b $ respectively. A transition of $ \mathcal{P} $ of the form $ (q',b,m) \in \delta(q,\sigma,a) $ where $ a,b \in X_{\varepsilon} $, $ q,q' \in Q $, $ \sigma \in \Sigma_{\varepsilon} $ and $ m\in M $ can be expressed equivalently as $(q', \langle Q_aP_b , m \rangle ) \in \delta'(q,\sigma) $ where $ \langle Q_aP_b , m \rangle  \in P(X) \times M  $. 

	A string is accepted by $ \mathcal{M} $ if and only if the product of the valences labeling the transitions in $ \mathcal{M} $ is equal to $ \langle 1,1 \rangle  $, equivalently when the product of the valences labeling the transitions in $ \mathcal{P} $ is equal to the identity element of $ M $ and the stack is empty. Since any polycyclic monoid is embedded in $ \textup{P}_2 $, we conclude that $ \mathtt{L}\in \mathfrak{L}(\textup{Val},\textup{NFA},\textup{P}_2\times M ) $.
	
	Conversely, let $ \mathtt{L}\in \mathfrak{L}(\textup{Val},\textup{NFA},\textup{P}_2\times M ) $ and let $ \mathcal{M} = \{Q,\Sigma,\textup{P}_2\times M,\delta,q_0, F\}$ be a valence automaton over $ \textup{P}_2\times M $ recognizing $ \mathtt{L} $. Suppose that $ \langle p, m \rangle \in \textup{P}_2\times M  $ is a valence labeling a transition of $ \mathcal{M} $. The product of the labels of a computation which involves a transition labeled by the zero element of $ \textup{P}_2 $ can not be equal to the identity element. Hence we can remove such transitions. Any nonzero element $ p $ of $ \textup{P}_2 $ can be written as $ Q_{x_{1}}Q_{x_{2}}\dots Q_{x_{n}} P_{y_{1}}P_{y_{2}}\dots P_{y_{o}} $ for some $ n,o \in \mathbb{N} $ and $ x_{i}, y_{i} \in X_{\varepsilon} $, after canceling out elements of the form $ P_aQ_a $ and $ P_bQ_b $, where $ X=\{a,b\} $ is the generator set for $ \textup{P}_2 $. The product can be interpreted as a series of pop operations followed by a series of push operations performed by a PDA, without consuming any input symbol. Hence, an equivalent valence PDA $ \mathcal{P}=\{Q',\Sigma,X,\delta',q_0, F,M\} $ can be constructed where a valence from $ M $ is assigned to each transition. Let $ (q', \langle p, m \rangle) \in \delta(q,\sigma) $ where  $ q,q' \in Q $, $ \sigma \in \Sigma_{\varepsilon} $, $\langle p, m \rangle \in \textup{P}_2\times M  $ and $ p=Q_{x_{1}}Q_{x_{2}}\dots Q_{x_{n}} P_{y_{1}}P_{y_{2}}\dots P_{y_{o}} $  be a transition in $ \mathcal{M} $. In $ \mathcal{P} $, we need an extra $ n+o $ states $ \{q_1, \dots ,q_{n+o}\} \notin Q $ and the following transitions to mimic that specific transition of $ \mathcal{M} $. 
	\begin{align*}
	(q_{1}, \varepsilon,m) &\in \delta'(q,\sigma,x_{1})  \\
	(q_{2}, \varepsilon,1) &\in \delta'(q_{1},\varepsilon,x_{2})\\
	&~\vdots\\
	(q_{n+1}, \varepsilon,1) &\in \delta'(q_{n},\varepsilon,x_{n})\\
     \end{align*}
     \begin{align*}
	(q_{n+2}, y_{1},1) &\in \delta'(q_{n+1},\varepsilon,\varepsilon )\\
	(q_{n+3}, y_{2},1) &\in \delta'(q_{n+2},\varepsilon,\varepsilon )\\
	&~\vdots\\
	(q', y_{o},1) &\in \delta'(q_{n+o},\varepsilon,\varepsilon )\\
	\end{align*}
	
	A string is accepted by $ \mathcal{P} $ if and only if the product of the valences labeling the transitions in $ \mathcal{P} $ is equal to the identity element of $M $ and the stack is empty, equivalently when the product of the valences labeling the transitions in $ \mathcal{M} $ is equal to $ \langle 1,1 \rangle $. We conclude that $ \mathtt{L}\in \mathfrak{L}(\textup{Val},\textup{PDA},M)$.
\end{proof}

Note that when $ M $ is commutative, the equality $\mathfrak{L}(\textup{Val},\textup{CF},M) = \mathfrak{L}(\textup{Val},\textup{NFA},\textup{P}_2\times M ) $ also holds. 
\begin{corollary}
	Let $ M $ be a polycyclic monoid of rank 2 or more. Then  $\mathfrak{L}(\textup{Val},\textup{PDA},M) $ is the class of recursively enumerable languages.
\end{corollary}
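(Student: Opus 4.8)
The plan is to combine Theorem~\ref{thm: pdap2} with the observation that a polycyclic monoid of rank at least $2$ is itself a faithful model of a pushdown store, so that a valence device built over it has \emph{two} independent stacks at its disposal, and two stacks suffice to simulate a Turing machine.

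First I would invoke Theorem~\ref{thm: pdap2} to reduce the statement to the purely automata-theoretic identity $\mathfrak{L}(\textup{Val},\textup{PDA},M) = \mathfrak{L}(\textup{Val},\textup{NFA},\textup{P}_2 \times M)$, so that it suffices to show that $\mathfrak{L}(\textup{Val},\textup{NFA},\textup{P}_2 \times M)$ is exactly the class of recursively enumerable languages, which I denote $\mathsf{RE}$. Next I would normalise the monoid $M$. Since $M=P(Y)$ is polycyclic with $|Y|\geq 2$, selecting two generators and restricting the push/pop operations to them exhibits a copy of $\textup{P}_2$ inside $M$; conversely, as recalled before Theorem~\ref{thm: pdap2}, $\textup{P}_2$ contains every polycyclic monoid of countable rank, so $M$ embeds into $\textup{P}_2$. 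Using the routine monotonicity of valence language classes under an injective monoid homomorphism $\varphi$ (replace each valence $m$ by $\varphi(m)$; a computation has valence $1$ if and only if its image does, because $\varphi$ is injective and sends identity to identity), both embeddings lift to the direct products, giving
\begin{equation*}
\mathfrak{L}(\textup{Val},\textup{NFA},\textup{P}_2 \times \textup{P}_2) \subseteq \mathfrak{L}(\textup{Val},\textup{NFA},\textup{P}_2 \times M) \subseteq \mathfrak{L}(\textup{Val},\textup{NFA},\textup{P}_2 \times \textup{P}_2),
\end{equation*}
and hence equality of the middle term with $\mathfrak{L}(\textup{Val},\textup{NFA},\textup{P}_2 \times \textup{P}_2)$.

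It then remains to identify $\mathfrak{L}(\textup{Val},\textup{NFA},\textup{P}_2 \times \textup{P}_2)$ with $\mathsf{RE}$. The inclusion in $\mathsf{RE}$ is routine: a Turing machine can nondeterministically trace a run of the automaton, maintaining the two coordinates of the accumulated valence (each an element of $\textup{P}_2$, i.e.\ a stack word) on two work tapes, and accept exactly when an accepting state is reached with the input consumed and both coordinates equal to $1$; the solvability of the word problem in $\textup{P}_2$ makes every step effective. For the reverse inclusion I would read the two $\textup{P}_2$ factors as two independent pushdown stores and appeal to the classical fact that a finite-state device equipped with two pushdown stores is Turing complete. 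The one point requiring care — and the step I expect to be the main obstacle — is that a valence transition cannot \emph{inspect} the top of a store: it can only attempt a pop, which produces the absorbing zero of $\textup{P}_2$ (thereby permanently barring acceptance, since $0$ can never be driven back to $1$) whenever the symbol popped is wrong. This is resolved by the standard guess-and-verify idiom: to read the top symbol the automaton nondeterministically guesses it and applies the corresponding $Q$-operation, so that only the runs whose guesses are correct survive to the valence $\langle 1,1\rangle$. Encoding in this style a two-stack simulation of an arbitrary Turing machine, with acceptance forced to mean ``both stores empty'', yields every recursively enumerable language and completes the argument.
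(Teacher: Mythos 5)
Your proposal is correct and follows the same skeleton as the paper: apply Theorem~\ref{thm: pdap2} to reduce the claim to identifying $\mathfrak{L}(\textup{Val},\textup{NFA},\textup{P}_2\times M)$ with the recursively enumerable languages. The difference is in how that identification is obtained. The paper simply cites the known result (from Kambites) that $\mathfrak{L}(\textup{Val},\textup{NFA},M\times M)$ is the class of recursively enumerable languages for $M$ polycyclic of rank at least $2$, and then concludes; you instead reprove that fact from scratch via the two-stack simulation of a Turing machine, with the guess-and-verify idiom handling the fact that a valence transition cannot inspect the top of a store (a wrong guess produces the absorbing zero and kills the run). That sketch is sound and is essentially the content of the cited result, so your route is more self-contained but not shorter. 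One point in your favour: the paper's one-line deduction silently identifies $\mathfrak{L}(\textup{Val},\textup{NFA},\textup{P}_2\times M)$ with $\mathfrak{L}(\textup{Val},\textup{NFA},M\times M)$, which requires exactly the mutual-embedding argument ($\textup{P}_2$ embeds in $M$ and, by the remark preceding Theorem~\ref{thm: pdap2}, $M$ embeds in $\textup{P}_2$, and injective unital homomorphisms preserve valence language classes) that you spell out explicitly; your write-up therefore closes a small gap that the paper leaves to the reader.
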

\begin{proof}
	It is known that $ \mathfrak{L}(\textup{Val},\textup{NFA}, M \times M  ) $ is the class of recursively enumerable languages \cite{Ka09} when $ M $ is a polycyclic monoid of rank 2 or more. Since $\mathfrak{L}(\textup{Val},\textup{PDA},M) = \mathfrak{L}(\textup{Val},\textup{NFA},\textup{P}_2\times M ) $, by Theorem \ref{thm: pdap2}, the result follows. 
\end{proof}

\subsection{Context-free valence languages}

It is known that the class of languages generated by regular valence grammars and the class of languages recognized by valence automata coincide \cite{FS02}. In this section, we are going to prove that the results proven in $ \cite{Re10} $ which hold for valence automata and therefore regular valence grammars, also hold for context-free valence grammars. Although the proofs are almost identical, they are presented here for completeness. Note that the same proofs can be also adapted to valence PDA.

Let $ I $ be an ideal of a semigroup $ S $. The binary relation $ \rho_I $ defined by 
\[ 
a \rho_I b \iff \mbox{ either $ a=b $ or both $ a $ and $ b $ belong to } I
\]
is a congruence. The equivalence classes of $ S\mod \rho_I$ are $ I $ itself and every one-element set $ \{x\} $ with $ x \in S\setminus I $. The quotient semigroup $ S/\rho_I $ is written as $ S / I $ and is called the \textit{Rees quotient semigroup}   \cite{Ho95}. 
\[ 
S /I = \{I\} \cup \{\{x\}| x \in S \setminus I \}
\]

In \cite{Re10} Prop. 4.1.1, it is shown that the elements belonging to a proper ideal of a monoid do not have any use in the corresponding monoid automaton. We show that the same result holds for context-free valence grammars.

\begin{proposition}\label{prop: ideal}
	Let $ I $ be a proper ideal of a monoid $ M $. Then $ \mathfrak{L}(\textup{Val},\textup{CF},M) = \mathfrak{L}(\textup{Val},\textup{CF},M/I).$
\end{proposition}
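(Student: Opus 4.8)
The plan is to exploit the fact that the class $I$ becomes a zero (absorbing) element of the Rees quotient $M/I$, so that the elements of $I$ are useless for successful derivations, exactly as in \cite{Re10} Prop. 4.1.1. First I would record the two structural facts on which everything rests. Since $I$ is a \emph{proper} ideal, $1\notin I$: otherwise $m = m\cdot 1 \in M^1\,I\,M^1 \subseteq I$ for every $m\in M$, forcing $I=M$. Consequently the identity class $[1]$ is a singleton of $M/I$ distinct from the zero class $I$, and for every $x\in M$ one has $[x]=[1]$ if and only if $x=1$. Let $\phi\colon M\to M/I$ be the canonical projection; it is a surjective monoid homomorphism with $\phi(1)=[1]$, it maps every $x\in M\setminus I$ to the singleton $\{x\}$ (so it is injective on $M\setminus I$), and it maps all of $I$ to the absorbing element. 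In particular any product of classes that contains the factor $I$ equals $I$, so a derivation over $M/I$ whose total valence is $[1]\neq I$ can use no rule whose valence is $I$, and all of its partial products remain outside $I$.

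For the inclusion $\mathfrak{L}(\textup{Val},\textup{CF},M)\subseteq \mathfrak{L}(\textup{Val},\textup{CF},M/I)$, given $G=(N,T,P,S,M)$ I would form $G'=(N,T,P',S,M/I)$ with $P'=\{(A\to\alpha,\phi(m)) : (A\to\alpha,m)\in P\}$ and argue $\mathfrak{L}(G')=\mathfrak{L}(G)$. Applying $\phi$ to the valences of any successful derivation of $G$ yields a successful derivation of $G'$, because $\phi$ is a homomorphism sending $1$ to $[1]$; this gives $\subseteq$. For the reverse, a successful derivation of $G'$ uses valences $n_1,\dots,n_k$ with $n_1\cdots n_k=[1]\neq I$; by the absorption remark none of the $n_i$ equals $I$, so each is a singleton $\{x_i\}$ with the unique preimage $x_i\in M\setminus I$, and the matching rule $(A_i\to\alpha_i,x_i)$ lies in $P$. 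Re-running the same sequence of context-free productions in $G$ produces the valence $x_1\cdots x_k$, and $[x_1\cdots x_k]=[1]$ forces $x_1\cdots x_k=1$; this is a successful derivation of $G$, so the two languages coincide.

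For the reverse inclusion I would lift a grammar $G'=(N,T,P',S,M/I)$ to $M$ by first deleting every rule of $P'$ whose valence is the zero $I$ --- such a rule can never occur in a successful derivation, so its removal leaves $\mathfrak{L}(G')$ unchanged --- and then replacing each surviving singleton valence $\{x\}$ by its unique representative $x\in M\setminus I$, obtaining $G=(N,T,P,S,M)$. Since successful derivations of $G'$ avoid $I$ they survive the deletion and lift verbatim to $G$ (the product equals $1$ because its $\phi$-image is $[1]$), and conversely every successful derivation of $G$ projects back to one of $G'$; hence $\mathfrak{L}(G)=\mathfrak{L}(G')$. The only point requiring care --- and the reason the argument works despite $\phi$ being many-to-one --- is that $\phi$ collapses elements only inside $I$, whereas successful derivations live entirely in $M\setminus I$, where $\phi$ is a bijection onto the non-zero classes; once this is isolated, both lifts are unambiguous and the equalities of languages are routine.
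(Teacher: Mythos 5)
Your proposal is correct and follows essentially the same route as the paper's proof: both arguments rest on the observations that $1\notin I$ (so the identity class is a singleton and rules with valence in $I$, respectively with the zero valence $I$, can be deleted without changing the language), and that for $x_1,\dots,x_n\in M\setminus I$ one has $x_1\cdots x_n=1$ in $M$ if and only if $\{x_1\}\cdots\{x_n\}=\{1\}$ in $M/I$. Your explicit framing via the projection $\phi$ and its injectivity on $M\setminus I$ is just a slightly more formal packaging of the same argument.
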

\begin{proof}	Let $ \mathtt{L} \in \mathfrak{L}(\textup{Val},\textup{CF},M)  $ and let $ G $ be a context-free grammar over the monoid $ M $ such that $ \mathfrak{L}(G)=\mathtt{L} $. The product of the valences which appear in a derivation containing a rule with valence $ x \in I  $, will itself belong to $ I $. Since $ I $ is a proper ideal and $ 1\notin I $, such a derivation is not valid. Hence any such rules can be removed from the grammar and we can assume that $ G $ has no such rules. For any $ x_1,x_2,\dots,x_n \in M \setminus I $, it follows that $ x_1\dots x_n=1 $ in $ M $  if and only if $ \{x_1\}\{x_2\}\dots \{x_n\}=\{1\} $ in $ M/I $. Let $ G' $ be the context-free grammar with valences in $ M/I $, obtained from $ G $ by replacing each valence $ x \in M$ with $ \{x\} $. It follows that a string $ w $ has a valid derivation in $ G $ if and only if the product of the valences is mapped to $ \{1\} $ in $ G' $. Hence  $ \mathfrak{L}(G')=\mathtt{L} $.
	
	Conversely let $ \mathtt{L} \in \mathfrak{L}(\textup{Val},\textup{CF},M/I)  $ and let $ G' $ be a context-free grammar over the monoid $ M/I $ such that $ \mathfrak{L}(G')=\mathtt{L} $. Suppose that there exists a valid derivation consisting of a rule with $ I $ as the valence. Then the product of the valences of the whole derivation will be $ I $, which is not possible. Let $ G $ be the context-free grammar with valences in $ M$, obtained from $ G' $ by replacing each valence $ \{x\} \in M/I$ with $ x $. Since  $ \{x_1\}\{x_2\}\dots \{x_n\}=\{1\} $ in $ M/I $ if and only if  $ x_1\dots x_n=1 $ in $ M $, a string $ w $ has a valid derivation in $ G $ if and only if the product of the valences is mapped to $ \{1\} $ in $ G' $. Hence  $ \mathfrak{L}(G)=\mathtt{L} $. 
\end{proof}

Let $ S $ be a semigroup. $ S $ is the \textit{null semigroup} if it has an absorbing element zero and if the product of any two elements in $ S $ is equal to zero. A null semigroup with two elements is denoted by $ \textup{O}_2 $.

The following corollary is analogous to  \cite{Re10} Cor. 4.1.2.
\begin{corollary}\label{cor: simple} 
	For every monoid $ M $, there is a simple or 0-simple monoid $ N $ such
	that \\  $ \mathfrak{L}(\textup{Val},\textup{CF},M) =   \mathfrak{L}(\textup{Val}, \textup{CF},N) $.
\end{corollary}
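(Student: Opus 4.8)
The plan is to mirror the valence-automata argument of \cite{Re10} Cor.~4.1.2, which is available to us here because Proposition~\ref{prop: ideal} lets us replace $M$ by its Rees quotient modulo any proper ideal without altering the generated language family. The whole task thus reduces to exhibiting, for each monoid $M$, either a direct verification that $M$ is already simple, or a proper ideal $J$ of $M$ whose Rees quotient $M/J$ is $0$-simple; the monoid $N$ will then be $M$ or $M/J$ accordingly.

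First I would dispose of the degenerate case: if $M$ has no proper ideal, then by definition $M$ is simple, and taking $N = M$ settles the statement trivially. Otherwise, let $J$ be the union of all proper ideals of $M$. A union of ideals is again an ideal, so $J$ is an ideal; and no proper ideal can contain the identity, since $1 \in I$ would give $m = m\circ 1 \in MI \subseteq I$ for every $m\in M$, forcing $I = M$. Hence $1 \notin J$, so $J$ is a proper ideal --- in fact the largest one, since it contains every proper ideal of $M$.

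Next I would pass to the Rees quotient $N = M/J$, whose zero is the class $[J]$ and whose identity is $[1]$ (well defined because $1\notin J$). To see that $N$ is $0$-simple, take any nonzero ideal $K'$ of $N$; its preimage under the quotient map is an ideal of $M$ that strictly contains $J$, and so by maximality of $J$ must be all of $M$, whence $K' = N$. Thus the only ideals of $N$ are $\{0\}$ and $N$; and since $[1]\circ[1] = [1] \neq 0$ we have $N^2 \neq \{0\}$, so $N$ is indeed $0$-simple. Applying Proposition~\ref{prop: ideal} with the proper ideal $J$ then yields $\mathfrak{L}(\textup{Val},\textup{CF},M) = \mathfrak{L}(\textup{Val},\textup{CF},N)$, completing the argument.

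The only genuinely non-routine point --- and the main thing to get right --- is that the union $J$ of all proper ideals is itself proper, equivalently that $M$ possesses a largest proper ideal. This is exactly where the monoid identity is used: it separates $1$ from every proper ideal, so the union of them cannot swallow $1$ and therefore cannot exhaust $M$.
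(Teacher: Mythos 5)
Your proof is correct and takes essentially the same route as the paper's: dispose of the case where $M$ has no proper ideal, otherwise pass to the Rees quotient $N=M/J$ by the union $J$ of all proper ideals and invoke Proposition~\ref{prop: ideal}. The one small divergence is that the paper (following Cor.~4.1.2 of \cite{Re10}) also entertains the degenerate possibility $N^2=\{0\}$ and handles it via the null semigroup $\textup{O}_2$, whereas your observation that $[1]\circ[1]=[1]\neq 0$ shows this case cannot arise when $M$ is a monoid, which slightly streamlines the argument.
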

\begin{proof} 
	If $ M $ has no proper ideals then it is simple. Otherwise, let $ I $ be the union of all proper ideals of $ M $ and let $ N=M/ I $. We can conclude from the proof of Cor. 4.1.2 \cite{Re10} 
	that $ N^2 = {0} $ or $ N $ is 0-simple. If $ N^2 = {0} $, then $  N $ is  $ \textup{O}_2 $ and the semigroup  $ \textup{O}_2 $ does not add any power to the grammar since it does not even contain the identity element. Hence, $ \mathfrak{L}(\textup{Val},\textup{CF},  \textup{O}_2) = \mathfrak{L}(\textup{Val},\textup{CF},\{1\}) $ where $ \{1\} $ is the trivial monoid which is simple. In the latter case $ N $ is 0-simple and by Proposition \ref{prop: ideal}, $ \mathfrak{L}(\textup{Val},\textup{CF},M) = \mathfrak{L}(\textup{Val},\textup{CF},M/I) =  \mathfrak{L}(\textup{Val}, \textup{CF},N) $.
\end{proof}

Prop. 4.1.3 of \cite{Re10} states that a finite automaton over a monoid with a zero element is no more powerful then a finite automaton over a version of the same monoid from which the zero element has been removed, in terms of language recognition. The result is still true for context-free valence grammars since the same proof idea applies. The following notation is used: $ M^0 = M \cup \{0\} $ if $ M $ has no zero element and $ M^0=M $ otherwise.  
\begin{proposition}\label{prop: zero}
	Let $ M $ be a monoid. Then $ \mathfrak{L}(\textup{Val},\textup{CF},M^0) =\mathfrak{L}(\textup{Val},\textup{CF},M) $.
\end{proposition}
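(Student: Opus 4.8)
The plan is to mirror the structure of Proposition \ref{prop: ideal} and exploit the single algebraic feature of a zero element, namely that it is absorbing and distinct from the identity. First I would dispose of the trivial case: if $M$ already possesses a zero element, then by definition $M^0 = M$ and there is nothing to prove. So I would assume henceforth that $M$ has no zero, whence $M^0 = M \cup \{0\}$ with $0 \notin M$. In this situation $M$ is a submonoid of $M^0$ sharing the same identity $1$ (since $0$ is absorbing, $1$ still acts as the identity on all of $M^0$), and one has $M = M^0 \setminus \{0\}$.

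For the inclusion $\mathfrak{L}(\textup{Val},\textup{CF},M) \subseteq \mathfrak{L}(\textup{Val},\textup{CF},M^0)$, I would take a context-free valence grammar $G$ over $M$ with $\mathfrak{L}(G) = \mathtt{L}$ and simply reinterpret it as a grammar over $M^0$, which is legitimate since every valence of $G$ already lies in $M \subseteq M^0$. Because the product of any finite sequence of elements of $M$ yields the same result whether computed in $M$ or in $M^0$, and the identity $1$ is common to both, a derivation is successful over $M$ if and only if it is successful over $M^0$; hence $\mathtt{L} \in \mathfrak{L}(\textup{Val},\textup{CF},M^0)$.

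For the reverse inclusion, I would start from a grammar $G^0$ over $M^0$ with $\mathfrak{L}(G^0) = \mathtt{L}$ and argue that its zero-valence rules are useless. Indeed, since $0$ is absorbing, any derivation that applies a rule whose valence is $0$ has total valence $0$; as $0 \neq 1$, such a derivation cannot be valid. Deleting every rule with valence $0$ therefore leaves the set of valid derivations, and thus the generated language, unchanged, and produces a grammar $G$ all of whose valences lie in $M^0 \setminus \{0\} = M$. Viewing $G$ as a valence grammar over $M$ and again using that products of non-zero elements and the identity agree in $M$ and $M^0$, I would conclude $\mathfrak{L}(G) = \mathtt{L}$, so $\mathtt{L} \in \mathfrak{L}(\textup{Val},\textup{CF},M)$.

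The argument is essentially free of obstacles; the only point requiring care is the bookkeeping of the two cases (whether $M$ already has a zero) and the verification that passing between $M$ and $M^0$ alters neither the value of a product of non-zero valences nor the identity against which validity is tested. The crux, exactly as in the monoid-automaton case of \cite{Re10} Prop. 4.1.3, is the single observation that an absorbing element distinct from $1$ can never occur in a valid derivation.
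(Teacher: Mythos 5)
Your proposal is correct and follows essentially the same route as the paper: the easy inclusion comes from $M \subseteq M^0$, and the reverse inclusion comes from observing that a rule with valence $0$ can never appear in a valid derivation (since $0$ is absorbing and distinct from $1$), so all such rules may be deleted, leaving a grammar over $M$. Your explicit handling of the case where $M$ already has a zero and your check that products and the identity are preserved between $M$ and $M^0$ are just slightly more careful bookkeeping of the same argument.
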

\begin{proof}
	Since $ M \subseteq M^0 $, it follows that $ \mathfrak{L}(\textup{Val},\textup{CF},M) \subseteq \mathfrak{L}(\textup{Val},\textup{CF},M^0) $. Suppose $ \mathtt{L} \in \mathfrak{L}(\textup{Val},\textup{CF},M^0) $ and let $ G $ be a context-free grammar with valences in $ M^0 $ and $ \mathfrak{L}(G)=\mathtt{L} $. Note that a valid derivation can not contain a rule with a zero valence since otherwise the product of the valences would be equal to zero. Any such rules can be removed from $ G $ to obtain $ G' $, a context-free grammar with valences in $ M $, without changing the language, and $ \mathtt{L} \in \mathfrak{L}(G') $.
\end{proof}

\begin{fact}\textup{\cite{Re10}}\label{fact: simple}
	A simple (0-simple) monoid with identity $ 1 $ is either a group (respectively, a group with 0 adjoined) or contains a copy of the bicyclic monoid as a
	submonoid having 1 as its identity element.
\end{fact}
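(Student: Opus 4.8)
The plan is to prove the simple and the $0$-simple cases in parallel, reducing everything to a single structural lemma: \emph{in any monoid $N$ with identity $1$, if there exist $p,q\in N$ with $pq=1$ but $qp\neq1$, then the submonoid $\langle p,q\rangle$ generated by $p$ and $q$ is a copy of the bicyclic monoid $B$ having $1$ as its identity.} Granting this lemma, it suffices to show that a simple (resp.\ $0$-simple) monoid with identity that is not a group (resp.\ not a group with $0$ adjoined) necessarily contains such a pair, i.e.\ a right-invertible element that fails to be a unit.

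First I would record the elementary fact that \emph{a monoid in which every element is left-invertible is a group}: if $cb=1$ and $dc=1$, then $b=(dc)b=d(cb)=d$, whence $bc=dc=1$ and $b$ is a unit. Second, I would exploit simplicity to locate witnesses. If $M$ is simple with identity, then for every $a\in M$ the principal ideal $MaM$ equals $M$, so $1\in MaM$ and there are $u,v$ with $uav=1$; setting $p=ua$ gives $pv=1$. Now argue by contraposition: assume every right-invertible element of $M$ is a unit. For arbitrary $b\in M$, write $xby=1$ (from $MbM=M$); then $xb$ is right-invertible, hence a unit, and since a right inverse of a unit is its two-sided inverse we get $y(xb)=1$, so $b$ is left-invertible. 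By the first step $M$ is a group. Contrapositively, if $M$ is simple but not a group there is a right-invertible non-unit $p$, with right inverse $q$ satisfying $pq=1$ and $qp\neq1$ (otherwise $q$ would be a two-sided inverse, making $p$ a unit). This delivers the pair the lemma needs.

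For the lemma itself I would use the universal property of $B=\langle b,c\mid bc=1\rangle$: the pair $p,q$ induces a monoid homomorphism $\phi\colon B\to M$ with $\phi(b)=p$ and $\phi(c)=q$. Every proper quotient of the bicyclic monoid is a cyclic group, in which the images of $bc$ and $cb$ coincide; since $\phi(cb)=qp\neq1=\phi(bc)$, the induced congruence cannot be one of these group congruences and must therefore be trivial, so $\phi$ is injective and $\langle p,q\rangle\cong B$ with identity $\phi(bc)=1$. This is the step I expect to be the main obstacle, as it rests on the precise classification of the congruences of $B$ (each is either the identity or a group congruence); if that cannot simply be cited, I would reprove it by analysing the normal form $c^{\,m}b^{\,n}$ of elements of $B$.

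Finally I would treat the $0$-simple case by the same template, recording only the adjustments forced by the zero. For $a\neq0$ the ideal generated by $a$ is not $\{0\}$, hence equals $M$, giving $uav=1$ exactly as before. The same contraposition now yields: if every right-invertible element is a unit, then every nonzero element is a unit, so $M\setminus\{0\}$ is a group and $M$ is a group with $0$ adjoined. Otherwise we obtain $p,q$ with $pq=1\neq qp$, and here one extra check is needed: $qp\neq0$ (else $q=q(pq)=(qp)q=0$, contradicting $pq=1$), and $0\notin\langle p,q\rangle$ because the image $\cong B$ has no absorbing element. Thus the bicyclic submonoid genuinely embeds in $M$ with $1$ as its identity, completing the dichotomy.
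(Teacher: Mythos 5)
The paper does not prove this statement; it is quoted as a known Fact with a citation to Render's thesis \cite{Re10}, so there is no in-paper argument to compare against. Your proof is correct and is essentially the classical one (it goes back to Clifford and Preston, and is the route taken in the cited source): reduce everything to the lemma that a pair $p,q$ with $pq=1\neq qp$ generates a copy of the bicyclic monoid, and use simplicity (via $1\in MaM$) to manufacture a right-invertible non-unit whenever the monoid is not a group. All the individual steps check out: the observation that a monoid whose elements are all left-invertible is a group, the contrapositive argument that forces every element to be left-invertible when right-invertibles are units, the non-vanishing of $qp$ in the $0$-simple case, and the exclusion of $0$ from $\langle p,q\rangle$ (which, as you note, also follows from $p^mq^mp^nq^n=1$). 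The only step you flag as a potential obstacle --- that every proper homomorphic image of $B$ is a cyclic group, so a homomorphism not identifying $cb$ with $1$ is injective --- is indeed a correct and citable classical result (Clifford--Preston, Vol.~1, Cor.~1.32); alternatively the normal-form argument you sketch (cancelling $p$'s against $q$'s in $q^mp^n=q^{m'}p^{n'}$) gives injectivity directly without the congruence classification. No gaps.
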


Now we are ready to prove the main theorem of the section which will allow us to determine the properties of the set of languages generated by context-free valence grammars. We need the following proposition which is the grammar analogue of Prop. 1 of \cite{Ka06}.

\begin{proposition}\label{prop: kambites}
	Let $ M $ be a monoid, and suppose that $ \mathtt{L} $ is accepted by a context-free valence grammar over $ M $. Then there exists a finitely generated submonoid $  N $ of $ M $ such that $ \mathtt{L} $ is accepted by a context-free valence grammar over $ N $.
\end{proposition}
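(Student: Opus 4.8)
The plan is to exploit the fact that a context-free valence grammar has, by definition, a \emph{finite} set of valence rules, and hence only finitely many distinct valences occur in it. Let $G=(N_G,T,P,S,M)$ be a context-free valence grammar over $M$ with $\mathfrak{L}(G)=\mathtt{L}$ (I rename the nonterminal alphabet to $N_G$ to avoid a clash with the submonoid $N$ to be constructed). Since $P$ is finite, the valences appearing in its rules form a finite set $\{m_1,\ldots,m_r\}\subseteq M$. I would then take $N$ to be the submonoid of $M$ generated by $\{m_1,\ldots,m_r\}$. By construction $N$ is finitely generated, and, being a submonoid of $M$, it contains the identity $1$ and inherits the product of $M$.

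Next I would build a context-free valence grammar $G'$ over $N$ simply by reinterpreting each valence rule $(A\rightarrow\alpha,m_i)$ of $G$ as a rule over $N$, leaving the valence $m_i$ unchanged but now regarded as an element of $N$; formally $G'=(N_G,T,P,S,N)$ with the identical rule set. It then remains to check that $\mathfrak{L}(G')=\mathfrak{L}(G)$. Recall that the step-derivation relation $(w,p)\Rightarrow(w',p')$ is governed by the rule $p'=pm_i$, a product computed in the monoid. Because $N$ is a submonoid of $M$, these products are computed identically whether we view the valences in $N$ or in $M$, and in any derivation of $G$ every factor of the accumulated valence lies in $\{m_1,\ldots,m_r\}\subseteq N$, so the whole product stays inside $N$.

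A derivation is successful exactly when this product equals $1$; since the identity of $N$ coincides with $1=1_M$, the successful derivations of $G$ and of $G'$ are in exact bijective correspondence, spelling the same words. Hence $\mathfrak{L}(G')=\mathfrak{L}(G)=\mathtt{L}$, which shows $\mathtt{L}\in\mathfrak{L}(\textup{Val},\textup{CF},N)$ for the finitely generated submonoid $N$, as required.

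There is no genuinely hard step here; the argument is essentially bookkeeping, which is unsurprising given that it is the grammar analogue of a known result of \cite{Ka06}. The only point that deserves a moment's care is the observation that the identity of the generated submonoid $N$ equals the ambient identity $1_M$, so that the acceptance condition ``product of valences equals $1$'' means the same thing in both grammars; this is immediate from the definition of a submonoid. The real content is simply the finiteness of the rule set $P$, which confines every derivation to the finitely generated submonoid of valences.
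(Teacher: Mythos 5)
Your proposal is correct and follows exactly the paper's argument: take the finite set of valences occurring in the (finite) rule set, let $N$ be the submonoid they generate, and observe that the grammar can be reinterpreted over $N$ without changing the language. You simply spell out in more detail the bookkeeping (identity of the submonoid, products staying inside $N$) that the paper leaves implicit.
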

\begin{proof}
	There are only finitely many valences appearing in the rules of a grammar since the set of rules of a grammar is finite. Hence, the valences appearing in derivations are from the submonoid $ N $ of $ M $ generated by those elements. So the grammar can be viewed as a context-free valence grammar over $ N $.
\end{proof}

Recall that a group $ G $ is \textit{locally finite} if every finitely generated subgroup of $ G $ is finite.

\begin{theorem}\label{thm: characterize}
	Let $ M $ be a monoid. Then $ \mathfrak{L}(\textup{Val},\textup{CF},M) $ either
	\begin{enumerate}[(i)]
		\item equals $  \mathsf{CF} $,
		\item contains  $ \mathfrak{L}(\textup{Val},\textup{CF},\textup{B}) $,
		\item contains $ \mathfrak{L}(\textup{Val},\textup{CF},\mathbb{Z}) $ or
		\item is equal to $ \mathfrak{L}(\textup{Val},\textup{CF},G) $ for $ G $ an infinite periodic group which is not locally finite.		
	\end{enumerate}
\end{theorem}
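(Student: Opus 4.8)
The plan is to mirror the monoid-automaton argument, reducing the arbitrary monoid $M$ to a group and then splitting into cases according to the orders of its elements. First I would apply Corollary~\ref{cor: simple} to replace $M$ by a simple or $0$-simple monoid $N$ with $\mathfrak{L}(\textup{Val},\textup{CF},M)=\mathfrak{L}(\textup{Val},\textup{CF},N)$, and then invoke Fact~\ref{fact: simple}: such an $N$, which carries an identity $1$, either contains a copy of the bicyclic monoid $\textup{B}$ as a submonoid whose identity is $1$, or else is a group $G$, or a group with a zero adjoined, $G^0$. The first possibility already yields alternative (ii): since the bicyclic submonoid shares the identity of $N$, every context-free valence grammar over $\textup{B}$ is verbatim a context-free valence grammar over $N$ (its valences lie in $\textup{B}\subseteq N$ and the accepting condition ``product $=1$'' is unchanged), whence $\mathfrak{L}(\textup{Val},\textup{CF},\textup{B})\subseteq\mathfrak{L}(\textup{Val},\textup{CF},N)=\mathfrak{L}(\textup{Val},\textup{CF},M)$. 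In the remaining possibility I would use Proposition~\ref{prop: zero} to discard the zero, reducing $G^0$ to $G$, so that in every case $\mathfrak{L}(\textup{Val},\textup{CF},M)=\mathfrak{L}(\textup{Val},\textup{CF},G)$ for a group $G$.

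It then remains to analyse the group $G$, and here I would distinguish three exhaustive cases. If $G$ has an element $g$ of infinite order, then $\langle g\rangle\cong\mathbb{Z}$ is a subgroup sharing the identity of $G$, so by the same submonoid argument $\mathfrak{L}(\textup{Val},\textup{CF},\mathbb{Z})\subseteq\mathfrak{L}(\textup{Val},\textup{CF},G)$, giving alternative (iii). Otherwise $G$ is periodic; if moreover $G$ is not locally finite we are exactly in alternative (iv), since $\mathfrak{L}(\textup{Val},\textup{CF},M)=\mathfrak{L}(\textup{Val},\textup{CF},G)$ with $G$ an infinite periodic group that is not locally finite. The only remaining case is that $G$ is periodic and locally finite (which in particular covers finite $G$), and here I would aim to show $\mathfrak{L}(\textup{Val},\textup{CF},G)=\mathsf{CF}$, i.e.\ alternative (i).

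For that last case the inclusion $\mathsf{CF}\subseteq\mathfrak{L}(\textup{Val},\textup{CF},G)$ is immediate, by assigning the identity valence to every rule. For the converse I would first apply Proposition~\ref{prop: kambites} to see that any language of $\mathfrak{L}(\textup{Val},\textup{CF},G)$ is generated by a grammar using only finitely many valences, hence over a finitely generated submonoid $N'$ of $G$; since $G$ is periodic, each generator $g$ of $N'$ satisfies $g^{-1}=g^{\,\textup{ord}(g)-1}\in N'$, so $N'$ is in fact a finitely generated subgroup of $G$, and by local finiteness $N'$ is a \emph{finite} group $F$. Thus everything reduces to the base case $\mathfrak{L}(\textup{Val},\textup{CF},F)=\mathsf{CF}$ for $F$ finite, which is the exact analogue of the identity $\mathfrak{L}_{1}(F)=\mathsf{REG}$ underlying the monoid-automaton argument of \cite{Re10}.

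I expect this base case to be the main obstacle. The natural construction folds the current group element into the nonterminals via a triple decoration $[p,A,q]$ with $p,q\in F$, exactly as in the automaton product construction. The delicate point is that, unlike the linear computation of an automaton, a context-free derivation may expand nonterminals in any order, so for a non-commutative $F$ the accumulated product of valences depends on that order, and the purely leftmost decoration need not capture every successful derivation. The plan is therefore to track enough finite information to account for arbitrary derivation orders — this is unproblematic when $F$ is commutative, where every order yields the same product — and then to confirm that the resulting family is still exactly $\mathsf{CF}$; establishing this inclusion carefully for general finite $F$ is where I would concentrate the effort.
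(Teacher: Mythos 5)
Your proposal follows the paper's proof essentially step for step: reduce to a simple or $0$-simple monoid via Corollary~\ref{cor: simple}, split by Fact~\ref{fact: simple} into the bicyclic case (giving (ii)) and the group case (using Proposition~\ref{prop: zero} to strip the zero), then split on whether $G$ has an element of infinite order (giving (iii)), is periodic and locally finite (aiming at (i) via Proposition~\ref{prop: kambites}), or is periodic and not locally finite (giving (iv)). Two remarks. First, at the Proposition~\ref{prop: kambites} step you are actually more careful than the paper: the proposition only yields a finitely generated \emph{submonoid}, and your observation that periodicity forces $g^{-1}=g^{\,\mathrm{ord}(g)-1}$, so that the submonoid is a subgroup and hence finite by local finiteness, is a point the paper passes over silently.

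Second, the one place where you stall --- the base case $\mathfrak{L}(\textup{Val},\textup{CF},F)=\mathsf{CF}$ for $F$ a finite group --- is not proved in the paper either: it is imported as a known result, cited as \cite{Ze11} (Zetzsche). Your diagnosis of why the naive triple construction $[p,A,q]$ is delicate is accurate: in a valence grammar the accumulated valence is the product of the rule valences \emph{in the order the rules are applied}, not in any canonical tree order, so for non-commutative $F$ different derivation orders of the same derivation tree yield different products, and the decoration must account for this. So you have correctly isolated the genuinely hard lemma; the intended resolution is to invoke the literature rather than to reprove it, and if you do want a self-contained argument you would need to reproduce Zetzsche's analysis rather than a routine product construction. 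With that citation supplied, your proof is complete and coincides with the paper's.
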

\begin{proof}
	Let $ M $ be a monoid. By Corollary \ref{cor: simple}, $ \mathfrak{L}(\textup{Val},\textup{CF},M) =   \mathfrak{L}(\textup{Val}, \textup{CF},N) $ for some simple or 0-simple monoid $ N $. By Fact \ref{fact: simple}, $ N $ either contains a copy of the bicyclic monoid as a submonoid or $ N $ is a group (a group with 0 adjoined). In the former case \textit{(ii)} holds. 
	
	In the latter case, if $ N $ is a group with zero adjoined, then by Proposition \ref{prop: zero} we know that for some group $ G $, $ \mathfrak{L}(\textup{Val}, \textup{CF},N) =\mathfrak{L}(\textup{Val}, \textup{CF},G) $. If $ G $ is not periodic, then it has an element of infinite order which generates a subgroup isomorphic to $ \mathbb{Z} $ and hence \textit{(iii)} follows.
	Otherwise, suppose that $ G $ is locally finite. By Proposition \ref{prop: kambites}, every language in  $ \mathfrak{L}(\textup{Val},\textup{CF},G) $ belongs to  $ \mathfrak{L}(\textup{Val},\textup{CF},H) $ for some finitely generated subgroup $ H $ of $ G $. Since $ G $ is locally finite, $ H $ is finite. Any language $ \mathfrak{L}(\textup{Val},\textup{CF},H) $ is context-free by a result from \cite{Ze11} and hence $ (i) $ holds. 
	The only remaining case is that $ G $ is a periodic group which is not locally finite, in which case $ (iv) $ holds.
\end{proof}

For instance, the result about valence grammars over commutative monoids in \cite{FS02}, now follows as a corollary of Theorem \ref{thm: characterize}.

\begin{corollary}
	Let $ M $ be a commutative monoid. Then $ \mathfrak{L}(\textup{Val},\textup{CF},M) = \mathfrak{L}(\textup{Val},\textup{CF},G) $ for some group $ G $.
\end{corollary}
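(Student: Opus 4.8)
The plan is to run the argument of Theorem \ref{thm: characterize} and observe that commutativity collapses all four alternatives into the single case of a group. First I would invoke Corollary \ref{cor: simple} to obtain a simple or $0$-simple monoid $N$ with $\mathfrak{L}(\textup{Val},\textup{CF},M) = \mathfrak{L}(\textup{Val},\textup{CF},N)$. The crucial preliminary observation is that this $N$ is again commutative: it is produced either as $M$ itself (when $M$ is simple), as the Rees quotient $M/I$ by the union $I$ of the proper ideals of $M$, or as the trivial monoid $\{1\}$ replacing $\textup{O}_2$. In each construction commutativity is inherited, since in a Rees quotient of a commutative monoid the classes satisfy $[a][b]=[ab]=[ba]=[b][a]$, and $\{1\}$ is trivially commutative.

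Next I would apply Fact \ref{fact: simple} to $N$: being simple or $0$-simple with identity $1$, it either contains a copy of the bicyclic monoid $\textup{B}$ as a submonoid having $1$ as its identity, or it is a group, or a group with a zero adjoined. Here the commutativity of $N$ does the decisive work. The bicyclic monoid is non-commutative (it is generated by elements $p,q$ with $pq=1$ but $qp\neq 1$), and every submonoid of a commutative monoid is commutative; hence $N$ cannot contain a copy of $\textup{B}$. This eliminates alternative \textit{(ii)} of Theorem \ref{thm: characterize}, so $N$ must be a group or a group with $0$ adjoined.

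To finish, I would treat the two surviving possibilities. If $N$ is already a group $G$, then $\mathfrak{L}(\textup{Val},\textup{CF},M) = \mathfrak{L}(\textup{Val},\textup{CF},G)$ immediately. If instead $N = G^0$ is a group with a zero adjoined, then Proposition \ref{prop: zero} yields $\mathfrak{L}(\textup{Val},\textup{CF},G^0) = \mathfrak{L}(\textup{Val},\textup{CF},G)$, so again the language family is that of a group. In either case the desired equality holds for a suitable group $G$.

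There is no deep obstacle here: the entire content of the corollary is the reduction already carried out in Theorem \ref{thm: characterize}, and the only point requiring care is the bookkeeping that commutativity survives the passage to $N$ in Corollary \ref{cor: simple}. Once that is secured, the non-commutativity of $\textup{B}$ rules out the one branch that would otherwise prevent $N$ from being, up to an adjoined zero, a group.
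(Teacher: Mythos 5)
Your argument is correct and is essentially the paper's own proof, which simply notes that a commutative monoid cannot contain a copy of the bicyclic monoid and appeals to the proof of Theorem~\ref{thm: characterize}. Your write-up usefully makes explicit the small point the paper leaves implicit, namely that the simple or $0$-simple monoid $N$ produced by Corollary~\ref{cor: simple} inherits commutativity from $M$, so that the bicyclic case is genuinely excluded for $N$ and not just for $M$.
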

\begin{proof}
	Since no commutative monoid $ M $ can contain a copy of the bicyclic monoid as a submonoid, the result follows by the proof of Theorem  \ref{thm: characterize}.
\end{proof}

\section{Future work}
Is it possible to prove a pumping lemma for rational monoid automata over permutable monoids?

In Section \ref{sec: pdav}, we prove that a valence PDA over $ M $ is equivalent to a valence automaton over $ \textup{P}_2 \times M$. Can we prove a similar equivalence result for context-free valence grammars?

In Theorem \ref{thm: characterize}, we conclude that when $ M $ is a monoid that contains B, $ \mathfrak{L}(\textup{Val},\textup{CF},M) $ contains the class
$ \mathfrak{L}(\textup{Val},\textup{CF},\textup{B}) $. Since B is not commutative, no correspondence with valence PDA exists, and little is known about the class $ \mathfrak{L}(\textup{Val},\textup{CF},\textup{B}) $, except that it contains the set of partially blind one counter languages. What can we say further about $ \mathfrak{L}(\textup{Val},\textup{CF},\textup{B}) $?

\bibliographystyle{eptcs}
\bibliography{references} 

\end{document}